\newtheorem{theorem}{Theorem}
\newtheorem{lemma}{Lemma} 
\newtheorem{proposition}{Proposition} 
\newtheorem{remark}{Remark}
\newtheorem{corollary}{Corollary}
\newtheorem{assumption}{Assumption}
\newcommand{\Zero}{\text{$\bm{0}$}}
\newcommand{\Eye}{\text{$\bm{I}$}}
\newcommand{\Inv}[1]{\text{$#1^{-1}$}}
\newcommand{\Abs}[1]{\text{$\left|#1\right|$}}
\newcommand{\Conj}[1]{\text{$\overline{#1}$}}
\newcommand{\Real}[1]{\text{$\Re\lbrace #1\rbrace$}}
\newcommand{\Imag}[1]{\text{$\Im\lbrace #1\rbrace$}}
\newcommand{\Ker}[1]{\text{$\mathbf{ker}\left(#1\right)$}}
\newcommand{\Diff}[2]{\text{$\frac{d#1}{d#2}$}}
\newcommand{\lmax}[1]{\text{$\bm{\lambda}_{\max}\!\left(#1\right)$}}
\newcommand{\lmin}[1]{\text{$\bm{\lambda}_{\min}\!\left(#1\right)$}}
\newcommand{\Symm}[1]{\text{$\left[#1\right]_{\text{sym}}$}}
\newcommand{\AntiSymm}[1]{\text{$\left[#1\right]_{\overline{\text{sym}}}$}}
\newcommand{\Order}[1]{\text{$\mathcal{O}\left(#1\right)$}}
\newcommand{\discard}[1]{}
\newcommand{\replace}[2]{{\discard{#1}}#2}
 \title{\LARGE \bf
 Grid-Forming Storage Networks: Analytical Characterization of Damping and Design Insights}
\author{Kaustav Chatterjee, Ramij Raja Hossain, Sai Pushpak Nandanoori, Soumya Kundu, \\Subhrajit Sinha, Diane Baldwin, and Ronald Melton 
\thanks{The authors are affiliated with the Pacific Northwest National Laboratory (PNNL), Richland, WA, USA. PNNL is a multi-program national laboratory operated for the United States (US) Department of Energy (DOE) by Battelle Memorial Institute under Contract No. DE-AC05-76RL01830. This research was supported by the Embedded Storage Project funded by the US DOE Office of Electricity (OE).}
}
\begin{document}

\maketitle

\begin{abstract}
   % Grid-forming storage resources are critical to the operation of power grids with high renewable penetration. 
   %Over the years, there has been considerable emphasis on understanding the benefits grid-forming storage resources bring towards managing volatility and enhancing grid reliability. However, little is studied about the supplementary advantages latent in their operation. This paper investigates one such application in which we explore the impact of grid-forming distributed storage in damping low-frequency inter-area oscillations. We present a detailed analysis characterizing the impact of inverter-droop and storage size on the slower eigenvalues, highlighting potential design considerations for enhancing system stability.  
   The paper presents a theoretical study on small-signal stability and damping in bulk power systems with multiple grid-forming inverter-based storage resources. A detailed analysis is presented, characterizing the impacts of inverter droop gains and storage size on the slower eigenvalues, particularly those concerning inter-area oscillation modes. From these parametric sensitivity studies, a set of necessary conditions are derived that the design of droop gain must satisfy, to enhance damping performance. The analytical findings are structured into propositions highlighting potential design considerations for improving system stability, and illustrated via numerical studies on an IEEE 68-bus grid-forming storage network.
\end{abstract}

\begin{IEEEkeywords}
    Grid-forming inverters, energy storage, oscillation, damping, droop control.
\end{IEEEkeywords}

%%%%%%%%%%%%%%%%%%%%%%%%%%%%%%%
% LIST OF SECTIONS

\section{Introduction} \label{sec_intro}

Inverter-interfaced energy storage systems are becoming increasingly important as the grid transitions to a future with high renewable penetration \cite{ibr_review, review_ibr_support_func}. These resources play a vital role in managing volatility from load-generation mismatches, providing flexibility and stability to the electric grid \cite{review_flexibility}. Traditionally, the storage resources in the grid have operated as large-scale auxiliary devices supporting ancillary services \cite{review_ibr_support_func}. %For instance, electrical utilities around the world, for long, have utilized ESS as auxiliary resources for postponing infrastructure investments. 
%This restrictive outlook on the operation of storage resources leaves most of their potential untapped. 
However, with the recent advancements in power electronics designs and distributed control, there is potential for a paradigm shift \cite{gfm_review, rebalancing, distr_control_siemens}. There is a growing interest among electrical utilities and researchers, to explore grid architectures with distributed storage resources \cite{embedded_storage_quan}. Instead of a few massive centralized storage units, these approaches utilize a network of modest-sized grid-scale storage assets distributed across the power system \cite{pnnl_embedded_storage}. With suitable interfacing controls, these resources can be made sufficiently fast to respond to grid disturbances at transient time scales \cite{review_ibr_support_func}. Coordinated networks of distributed storage resources operating at the transmission-distribution boundary can act as an energy buffer between the systems, enhancing the grid's reliability multi-fold, and providing support as a core grid component --- leading to the concept of \textit{embedded storage network} \cite{pnnl_embedded_storage}. 

In recent years, grid-forming inverters (GFMs) have emerged as a promising technology for interfacing energy storage to power networks \cite{gfm_review, gfm_lasseter}. Unlike their grid-following (GFL) counterparts, GFMs do not require phase-locked loops for tracking grid voltage angles and synchronization to the grid \cite{weiDu_gfm_gfl, GFM_GFL_compare}. GFMs actively control the frequency and voltage at their point of interconnection (POI) and can emulate the dynamic behavior of synchronous generators (SGs) \cite{gfm_review}. Different types of GFMs, like virtual synchronous machines \cite{compare_vsm_droop}, droop-controlled GFMs \cite{weiDu_droop_comparison}, virtual oscillator control \cite{brian_voc}, etc., have been proposed over the years. These offer several advantages over GFLs, particularly, in frequency regulation, inertia emulation, blackout restoration, and operation of standalone microgrids \cite{gfm_review}. There is also evidence that suggests GFM-interfaced energy storage can improve oscillatory dynamics and transient behavior of power systems \cite{gfm_review}. Research has shown that tuning the controller parameters of the inverters can enhance system stability \cite{Chatterjee_def_statcom}. Particularly in droop-controlled GFMs, the choice of active power-frequency droop parameter has a direct influence on the system's eigenvalues and their damping \cite{weiDu_droop_comparison}, \cite{vsg_droop_comparison}. In addition, the inverters' available headroom, storage capacities, and grid strength also impact their dynamic performance \cite{gfm_gfl_duality, droop_small_signal}. 

Previous works \cite{weiDu_droop_comparison,  droop_small_signal, smallsignal_lowinertia, 9328533} analyzing these influences have mainly relied on numerical simulations for deriving their conclusions. Most of these studies are premised on computing the loci of the eigenvalues from batch simulations sweeping over the model parameters. 
% Although these simulation studies provide a good sense of the parameter sensitivity, rigorous mathematical treatment may still be necessary to explain the underlying dynamics. 
The inferences derived, otherwise, get restricted to the specific cases and can not be generalized. Recent related efforts \cite{kundu2019identifying,nandanoori2020distributed,gorbunov2021identification,siahaan2024decentralized} have considered the problem of deriving conditions for stability as upper limits on the inverter droop gains. But these results are derived in the context of microgrids, and do not address the problem of the damping in a large-scale bulk grid network with both inverters and synchronous generators. In this paper, we attempt to address this gap. We present a theoretical analysis to compute the sensitivity of the dominant eigenvalues with changes in parameters like inverter droop setting and storage capacity. We derive the necessary conditions on the droop gain for enhancing oscillation damping. In particular, complementing the existing work \cite{kundu2019identifying,nandanoori2020distributed,gorbunov2021identification,siahaan2024decentralized} which have prescribed upper bounds on droop gains, we derive a lower bound on the choice of droop, below which the damping performance is degraded. 
% This complements the existing upper limits on droop gains prescribed in the literature \cite{kundu2019identifying,nandanoori2020distributed,gorbunov2021identification,siahaan2024decentralized}, that stem from various small-signal and nonlinear stability considerations. 
The findings are summarized into stability-informed design insights.

The remainder of the paper is structured as follows. In Section \ref{sec:model} the system model is described, consisting of the reduced-order descriptions of the generators, GFM-interfaced distributed storage resources, and the power network. Section \ref{sec:analysis} presents a detailed analysis of the sensitivity of the eigenvalues on GFM droop and inverter capacity. The observations are generalized into propositions summarizing the conditions under which damping increases with a decrease in the droop setting. The numerical results from simulation studies verifying the claims on the IEEE $68$-bus system are presented in Section \ref{sec:simulation}. The concluding remarks are presented in Section \ref{sec:conclusion}. 

%\cite{kundu2020transient}

\section{Preliminaries}

\textbf{Notations.} The following notations will be used throughout the text. $\bm{I}_n\in\mathbb{R}^{n\times n}$ denotes the $n$-dimensional identity matrix, while $\Zero$ denotes a matrix (or, a vector) of appropriate dimensions will all entries as zeros. For every complex number $z\in\mathbb{C}$, $\overline{z}\in\mathbb{C}$ denotes its complex conjugate, with similar notation used to represent conjugate of complex vectors and matrices; while $\Real{\cdot}$ and $\Imag{\cdot}$ denote, respectively, the real and imaginary parts of a complex scalar, vector, or matrix. The notation \Abs{\cdot} is used to denote the absolute value of a complex scalar, while $\|\cdot\|_2$ is used to denote the $\ell_2$-norm of a complex vector and matrix. For any complex square matrix $C$\,, $C^H:=\Conj{C}^\top$ denotes its Hermitian (i.e., conjugate transpose). The maximum and minimum real eigenvalues of a Hermitian matrix are denoted by \lmax{\cdot} and \lmin{\cdot}\,, respectively. We use the notation 
% $\|\cdot\|_p$ to denote the $p$-norm (of a vector or a matrix), and 
$\Ker{\cdot}$ to denote the kernel (nullspace) of a matrix. For any Hermitian matrix $C$\,, the notations $C\succ\Zero$ and $C\prec\Zero$ (equivalently, $\succeq$ and $\preceq$) are used to denote, respectively, positive and negative definiteness (equivalently, semi-definiteness) of the matrix $C$. \replace{\input{sections/damping/cdc_damping/discard/old_properties}}{
For any $n$-dimensional square matrix $C\!\in\!\mathbb{C}^{n\times n}$\,, we denote its Hermitian and skew-Hermitian parts by $C_h$ and $C_{h'}$\,, respectively, defined as:
    \begin{align}\label{eq:def_symm}
        C_h:=\frac{1}{2}\left(C+C^H\right)\,,\,~\,C_{h'}:=\frac{1}{2}\left(C-C^H\right)\,.
    \end{align}
%
%Note that $C_h$ and $C_{h'}$ denote, respectively, the symmetric and skew-symmetric components when $C$ is real-valued. 
We present the following useful result: 
\begin{proposition}\label{fact3}
    Consider an $n$-dimensional complex square matrix $C$\,, and some $n$-dimensional complex vector $v$\,. Then 
    \begin{align*}
        v^HCv\text{ is real}\implies v^HCv=v^HC_hv\,.
    \end{align*} 
\end{proposition}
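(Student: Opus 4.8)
The plan is to decompose $C$ into its Hermitian and skew-Hermitian parts using the definition in \eqref{eq:def_symm}, namely $C = C_h + C_{h'}$, and then show that the skew-Hermitian part contributes nothing to the quadratic form $v^H C v$ under the hypothesis. First I would write
\begin{align*}
    v^H C v = v^H C_h v + v^H C_{h'} v\,,
\end{align*}
so the claim reduces to showing $v^H C_{h'} v = 0$ whenever $v^H C v \in \mathbb{R}$.

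The key observation is that for any complex square matrix, the scalar $v^H C_h v$ is always real and $v^H C_{h'} v$ is always purely imaginary. Indeed, $(v^H C_h v)^H = v^H C_h^H v = v^H C_h v$ since $C_h$ is Hermitian, and a scalar equal to its own conjugate is real; similarly $(v^H C_{h'} v)^H = v^H C_{h'}^H v = -v^H C_{h'} v$ since $C_{h'}$ is skew-Hermitian, so that scalar equals the negative of its conjugate and is therefore purely imaginary. Hence the decomposition $v^H C v = v^H C_h v + v^H C_{h'} v$ is precisely the decomposition of the complex number $v^H C v$ into its real and imaginary parts, i.e. $\Real{v^H C v} = v^H C_h v$ and $\Imag{v^H C v} = -i\, v^H C_{h'} v$ (up to sign convention).

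Given this, the conclusion is immediate: if $v^H C v$ is real, its imaginary part vanishes, so $v^H C_{h'} v = 0$, and therefore $v^H C v = v^H C_h v$, as desired.

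I do not anticipate a genuine obstacle here — the result is essentially the standard Toeplitz/Hermitian-decomposition fact for numerical ranges. The only point requiring mild care is the bookkeeping of conjugate transposes for scalars (treating $v^H C v$ as a $1\times 1$ matrix so that $(v^H C v)^H = \overline{v^H C v}$), and confirming that $C_h^H = C_h$ and $C_{h'}^H = -C_{h'}$ follow directly from \eqref{eq:def_symm}. Everything else is a two-line computation.
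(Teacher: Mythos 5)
Your argument is correct and is exactly the route the paper takes: decompose $C=C_h+C_{h'}$ and observe that $v^HC_{h'}v$ is purely imaginary, hence must vanish when $v^HCv$ is real. You simply spell out the details that the paper's one-line proof labels as trivial.
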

\begin{proof}
   The proof is trivial when we write $C=C_h+C_{h'}$\,, and notice that $v^HC_{h'}v=0$ must hold for $v^HCv$ to be real.   
\end{proof}
}

\section{System Description}\label{sec:model}

Consider a grid with $n_g$ synchronous generators (SGs) and $n_i$ grid-forming inverter (GFM)-interfaced storage resources. 
\vspace{-0.3cm}
\subsection{Synchronous Generator Model}
The dynamics of each SG $k \in \mathcal{N}_g : \{ 1, \, 2, \, \dots, n_g\}$ is represented by the second-order swing model described in (\ref{eq:SG}).
\begin{subequations} \label{eq:SG}
    \begin{align}
        \dot{\delta}_k \, &= \, \omega_k \, - \, \omega_0\\
       M_k \, \dot{\omega}_k \, + \, D_k \, (\omega_k - \omega_0) \, &= \, P^m_{k} \, - \, P_{k} 
    \end{align}
\end{subequations}
$\delta_k$ and $\omega_k$ are respectively the internal angle and speed of the $k^{\text{th}}$ machine. $\omega_0$ is the synchronous speed. $M_k$ and $D_k$ are respectively the inertia constant and the equivalent damping coefficient, and $P^m_{k}$ and $P_{k}$ are respectively the mechanical input and electrical outputs of the SG.  

\par We stack the variables from all $n_g$ SGs to define the state vectors $\delta_g := \left[ \{\delta_k \}_{k = 1}^{n_g}\right]^\top$ and $\omega_g := \left[ \{\omega_k \}_{k = 1}^{n_g}\right]^\top$, the output vector $P_g := \left[ \{P_k \}_{k = 1}^{n_g}\right]^\top$, and the parameter matrices $M := diag. \left[\{M_k\}_{k = 1}^{n_g} \right]$ and $D := diag. \left[(\{D_k\}_{k = 1}^{n_g} \right]$. Let us denote by the following the largest and smallest diagonal entries in the matrices $M$ and $D$\,:
\begin{equation}\label{eq:MD_maxmin}
    \begin{aligned}
        M_u&:=\max_{1\le k\le n_g}M_k\,,\,~\,M_l:=\min_{1\le k\le n_g}M_k\,,\\
        D_u&:=\max_{1\le k\le n_g}D_k\,,\,~\,D_l:=\min_{1\le k\le n_g}D_k\,.
    \end{aligned}
\end{equation}The state equations are next linearized about the operating point. Assuming there is no change in the mechanical power input, i.e., $\Delta P_{m,k} = 0, \, \forall \, k \in \mathcal{N}_g$, the small-signal representation of the SG dynamics may be written as follows. 
\begin{subequations} \label{eq:lin_delg_omegag}
    \begin{align}
        \Delta \dot{\delta_g} \, &= \, \Delta \omega_g \\
        \Delta \dot \omega_g &= -M^{-1}{D}\,\Delta \omega_g - M^{-1} \,\Delta {P}_g
    \end{align}
\end{subequations}

\subsection{Inverter Model}
The inverter dynamics of each storage resource $j \in \mathcal{N}_i: \{1, \, 2, \, \dots, n_i\}$ is represented by the droop-controlled GFM model described in (\ref{eq:gfm}).
\begin{subequations}
\label{eq:gfm}
\begin{align}
\dot{\delta}_j \, &= \, \omega_j \, - \omega_0\\
\dot{\omega}_j \, &= \, \frac{1}{\tau_j}\, \big(\omega_0 \, - \, \omega_j \, + \, m_{p,j} \, (P^{{set}}_j \, - \, P_{j})\big)\\
\dot{V}_j \, &= \, \frac{1}{\tau_j}\, \big(V_0 \, - \,  V_j \, + \, m_{q,j} \, (Q^{{set}}_j \, - \,  Q_{j})\big)
%,\\ &\dot{E}_j = k^{pv}_j \dot{V}^e_j + k^{iv}_j V^e_j
\end{align} 
\end{subequations}
$\delta_j$ and $\omega_j$ are the angle and frequency of the internal bus of the $j^{\text{th}}$ inverter respectively. $V_j$ is the voltage magnitude of the terminal bus to which the inverter is connected. $P^{{set}}_j$ and $Q^{{set}}_j$, and $P_j$ and $Q_j$  are respectively the real and reactive power set-points and outputs for the inverter.
$m_{p,j}$ and $m_{q,j}$ are respectively the real power-frequency and reactive power-voltage droop coefficients of the GFM normalized %\footnote{The units of $m_{p,j}$ and $m_{q,j}$ are per-units-Hz/per-units-VA.} 
to the inverter capacity $S_j$ (in MVA), i.e., 
\begin{equation} \label{droop_def}
    \begin{aligned}
        m_{p,j} \, = \, \frac{\hat{m}_{p,j}}{S_j} \, ~~~~\text{and}~~~~
        m_{q,j} \, = \, \frac{\hat{m}_{q,j}}{S_j}\,
    \end{aligned}
\end{equation}
where, $\hat{m}_{p,j}$ and $\hat{m}_{q,j}$ are droop settings in percentages\footnote{ $\hat{m}_{p,j}$ is the real power-frequency droop setting indicating the allowable p.u. change in frequency for a 100 MW change in active power. For example, a $5\%$ droop ($\hat{m}_{p,j} = 0.05$ p.u./MW) implies for every 100 MW change, the frequency is allowed to change 5 p.u. from nominal.}. $\tau_i$ is the time constants of the low pass filter used for the active and reactive power measurements. The faster dynamics of the GFM compared to the SG impose the condition 
$\frac{1}{\tau_j} \gg \max\limits_k \frac{D_k}{M_k} \, \forall \, j \, \in \mathcal{N}_i$.   This implies,  $\tau_j\, \dot \omega_j \approx 0$ and $\tau_j\, \dot V_j \approx 0$. Therefore, while analyzing slower dynamics from the electromechanical transients it is reasonable to assume, 
\begin{subequations}
    \begin{align} 
        \omega_j \, & = \, \omega_0 \, + \, m_{p,j} \, (P^{{set}}_j \, - \, P_{j})\big)\\ 
        \implies \, \dot \delta_j \, &= \, m_{p,j} \, (P^{{set}}_j \, - \, P_{j})
    \end{align}
\end{subequations}
We stack the variables from all $n_i$ GFMs to define the state vector $\delta_i := \left[ \{\delta_j \}_{j = 1}^{n_i}\right]^\top$, the output vector $P_i := \left[ \{P_j \}_{j = 1}^{n_i}\right]^\top$, and the droop matrix $M_p := diag. \left[\{m_{p,j}\}_{j = 1}^{n_i} \right]  = m_p\, \Eye_{n_i}$. Assuming the power set-points are held constant, the linearized angle dynamics of the GFMs may be expressed as follows. 
\begin{equation} \label{eq:lin_deli}
    \Delta \dot{\delta_i} \, = \, -\, m_p \, \, \Delta P_i
\end{equation}
%\vspace{-0.2cm}
\subsection{Network Model}
We consider the algebraic model of a meshed transmission network where each line is represented by a $\Pi$-equivalent. The active and reactive power injections at any bus $j$ can be expressed as functions of the state variables $\delta_g$ and $\delta_i$. Following which, we may write, $P_g \, = \, f_1( \delta_g, \, \delta_i )$ and $P_i \, = \, f_2( \delta_g, \, \delta_i )$. The linearized expressions of these output variables can therefore be expressed as, 
\begin{equation} \label{eq:lin_Pg_Pi}
    \begin{aligned}
        \Delta P_g \, &= \, K_{gg} \, \Delta\delta_g \, + \, K_{gi}\, \Delta\delta_i \\
        \Delta P_i \, &= \, K_{ig} \, \Delta\delta_g \, + \, K_{ii}\,\Delta\delta_i\\
        \text{where }\,~\,K_{gg} &:= \left[ \frac{\partial P_g}{\partial \delta_g}\right]=K_{gg}^\top\,,\,K_{ii} := \left[ \frac{\partial P_i}{\partial \delta_i}\right]=K_{ii}^\top\,,\\
        K_{gi} &:= \left[ \frac{\partial P_g}{\partial \delta_i}\right]=K_{ig}^\top\,,\,K_{ig} := \left[ \frac{\partial P_i}{\partial \delta_g}\right]=K_{gi}^\top\,
    \end{aligned}
\end{equation}
are the Jacobian matrices of the partial derivatives of power injection with respect to the angle variables. $K_{gg}$ and $K_{ii}$ are diagonally dominant symmetric matrices which can be expressed as:
\begin{equation}\label{eq:Kii}\begin{aligned}
    K_{gg}&=K_{gg}^{\text{diag}} + K_{gg}^{\text{L}}\\
    K_{ii}&=K_{ii}^{\text{diag}} + K_{ii}^{\text{L}}
\end{aligned}\end{equation}
where $K_{gg}^{\text{diag}}$ and $K_{ii}^{\text{diag}}$ are diagonal matrices with positive diagonal entries, while $K_{gg}^{\text{L}}$ and $K_{ii}^{\text{L}}$ have the structure of weighted Laplacian matrices. Moreover, the matrices $K_{gg}^{\text{diag}}$ and $K_{ii}^{\text{diag}}$ satisfy the following property in typical networks:
\begin{assumption}[Network]\label{AS:network}
    The matrices $K_{gg}$, $K_{ii}$, $K_{gi}$, and $K_{ig}$ satisfy the following positive definiteness condition:
    \begin{align*}
        K_{gg}-K_{gi}\Inv{\left(K_{ii}^\top\right)}K_{ig}\succ \Zero
    \end{align*}
\end{assumption}

\begin{assumption}[Network]\label{AS:network_inverse}
    % The matrices $K_{gi}\,K_{gi}^\top$ and $K_{ig}^\top\,K_{ig}$ are invertible, i.e., $\Inv{\left(K_{gi}\,K_{gi}^\top\right)}$ and $\Inv{\left(K_{ig}^\top\,K_{ig}\right)}$ exist.
    $K_{gi}K_{gi}^\top$ and $K_{ig}^\top K_{ig}$ are invertible.
\end{assumption}

\begin{remark}
    Note that for Assumption\,\ref{AS:network_inverse} to hold, we must have $n_i>n_g$\,, i.e., the number of grid-forming storage units on the network must be larger than the number of synchronous generators, which is the premise of the concept of the \textit{embedded storage network} as proposed in \cite{pnnl_embedded_storage}.
\end{remark}

\subsection{Grid Strength and Bounds on $K_{ii}$}
Note that, the diagonal entries in $K_{ii}$, denoted by $K_{ii,k}$ $\forall \, k\,\in\mathcal{N}_i$, are functions of the line admittances between the inverters' internal bus and their points of interconnection (POI). Therefore, they are an indirect measure of the grid strength \cite{gfm_gfl_duality}. Recent studies on stiff grid conditions have shown GFMs to be vulnerable to instability under high short-circuit impedance at POI \cite{gfm_gfl_duality}. Considering this, it is prudent to assume that the stability-enforcing designs would ensure the inverter admittances to be lower than a critical value. Similarly, to avoid a very weak grid scenario, the admittances must also satisfy a lower bound.
% , i.e., $K_{ii,k} \, \geqslant \, k_2, \, \forall \, k$. 
% Further, assuming that no two inverters are connected to the same bus, it may be reasonably argued that for all practical purposes $K_{ii}$ is nearly diagonal. This implies, that $K_{ii}$ may also be bounded as $\gamma_l \, \Eye_{n_i}\, \preccurlyeq \, K_{ii} \, \preccurlyeq \gamma_u\,\Eye_{n_i}$. 
Especially, the entries of the diagonal components $K_{ii}^{\text{diag}}$ in \eqref{eq:Kii} are related to the shunt admittances connected at the POIs of the grid-forming storage units. Therefore: 
\begin{equation}\label{eq:Kii_up}
\begin{aligned}
    \gamma_l\Eye_{n_i}&\preceq K_{ii}\preceq \gamma_u\Eye_{n_i}\,,\,\\
    \text{where }\gamma_l&:=\min_{j\in\mathcal{N}_i}K_{ii,j}^{\text{diag}}>0\,,\\
    \gamma_u&:=\max_{j\in\mathcal{N}_i}\left[K_{ii,j}^{\text{diag}}+2K_{ii,j}^{\text{L}}\right]>0\,,
\end{aligned}    
\end{equation}
with $K_{ii,j}^{\text{diag}}$ and $K_{ii,j}^{\text{L}}$ denoting, respectively, the j-th diagonal entries in $K_{ii}^{\text{diag}}$ and $K_{ii}^{\text{L}}$\,.

\subsection{State-Space Representation and Oscillation Modes}
Substituting (\ref{eq:lin_Pg_Pi}) in (\ref{eq:lin_delg_omegag}) and (\ref{eq:lin_deli}), the state-space representation of the overall system can be written compactly as follows:
\begin{equation} \label{A}
\begin{aligned}
\renewcommand{\arraystretch}{1.2}
    \dot{\bm{x}} &= A\, \bm{x}\\
    \text{where,}~\bm{x}&:=\begin{bmatrix}
        \Delta \delta_g^\top & \Delta \omega_g^\top & \Delta \delta_i^\top
    \end{bmatrix}^\top\\
    \text{and }A &:= 
    \begin{bmatrix}
        \Zero & \Eye_{n_g} & \Zero \\
        -M^{-1}{K_{gg}} & -M^{-1}{D} & -M^{-1}K_{gi}\\
        -m_p\, K_{ig} &\Zero & -m_p\, K_{ii}
    \end{bmatrix}\,.
\end{aligned}
\end{equation}
The eigenvalues of $A$ characterize the natural modes of oscillation in the system. If $\lambda = \Re\{\lambda\}\,+\,j\,\Im\{\lambda\}$ is an eigenvalue of $A$, then the modal frequency, $f_\Omega$, and damping ratio, $\zeta$, for this mode are defined as follows.
\begin{subequations}\label{eq:define_damping}
    \begin{align}
        f_\Omega(\lambda) &= \frac{\Im\{\lambda\}}{2\,\pi} &\text{(in Hz)}\,,\\
        \zeta(\lambda) &= \frac{\Abs{\Re\{\lambda\}}}{|\lambda|}\,\times\, 100 &\text{(in $\%$)}
    \end{align}
\end{subequations}
This work analyzes the oscillation modes in the frequency range $0.1 < f_\Omega < 1$ Hz. These are the slower eigenvalues, commonly referred to as the inter-area modes in a power system, for which a group of generators in one area oscillates in unison against another group from other areas in the system. The reduced-order system model of (\ref{A}) is reasonably adequate for studying these slower electromechanical dynamics. 
\begin{assumption}[Slow Eigenvalues]\label{AS:slower}
    The eigenvalues $\lambda\in\mathbb{C}$ of interest of the matrix $A$, that correspond to the inter-area modes of oscillations, satisfy the condition $\Abs{\lambda}\ll m_p\left\|K_{ii}\right\|_2$\,, and that the matrix $(\lambda \Eye_{n_i} \,+\, m_p K_{ii})$ is invertible.
\end{assumption}

Next, in the paper, we perform eigenanalysis on $A$ to characterize the damping of the oscillation modes and their sensitivity to the changes in system parameters. 

\section{Main Results: Eigen-Analysis and Damping}\label{sec:analysis}

\subsection{Characterization of Slow Eigenvalues}

\begin{lemma}[Eigenvectors]\label{lem:eigenvectors}
For any eigenvalue $\lambda$ of interest of the matrix $A$ in \eqref{A}, satisfying Assumption\,\ref{AS:slower}\,, the following
\begin{align*}
    \Lambda(\lambda,m_p)\!:=\,&\lambda^2\Eye_{n_g}\!+\lambda\,\Inv{M}D\\
    &+\Inv{M}\!\left(K_{gg}\!-\!m_p\,K_{gi}\Inv{\left(\lambda\Eye_{n_i}+m_pK_{ii}\right)\!}K_{ig}\right)
\end{align*}
is a singular matrix. Moreover, for any pair of $n_g$-dimensional non-zero complex vectors $u_*,v_*\in\Ker{\Lambda(\lambda,m_p)}$\,, the pair of vectors $u,v\in\mathbb{C}^{2n_g+n_i}$ given by
\begin{align*}
    v &= \begin{bmatrix}
        \left(\lambda\,\Eye_{n_g} + M^{-1}{D}\right)\\ 
        \Eye_{n_g}\\
        -\left(\lambda\,\Eye_{n_i} + m_p\,K_{ii}\right)^{-1}K_{gi}^\top\,M^{-1}
    \end{bmatrix}\,v_*\,,\,~\\
    \text{ and }~\,u &= \begin{bmatrix}
        \Eye_{n_g} \\
        \lambda\,\Eye_{n_g}\\
        -m_p\left(\lambda\,\Eye_{n_i} + m_p~K_{ii}\right)^{-1}K_{ig}
    \end{bmatrix}\,u_*\,.
\end{align*}
are, respectively, the left and right eigenvectors associated with the eigenvalue $\lambda$ of the matrix $A$\,.    
\end{lemma}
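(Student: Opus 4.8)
The plan is to exploit the block partition of $A$ in \eqref{A} into $n_g$, $n_g$, and $n_i$ rows/columns and to reduce the eigenvalue equation to a condition on the generator-angle block by a Schur-type elimination of the inverter-angle block. Fix an eigenvalue $\lambda$ of interest and write a candidate right eigenvector as $u=\begin{bmatrix}u_1^\top & u_2^\top & u_3^\top\end{bmatrix}^\top$ with $u_1,u_2\in\mathbb{C}^{n_g}$ and $u_3\in\mathbb{C}^{n_i}$. The three block rows of $Au=\lambda u$ read
\begin{align*}
    u_2 &= \lambda u_1,\\
    -M^{-1}K_{gg}u_1-M^{-1}D\,u_2-M^{-1}K_{gi}u_3 &= \lambda u_2,\\
    -m_p K_{ig}u_1 - m_p K_{ii}u_3 &= \lambda u_3.
\end{align*}
The third row gives $\left(\lambda\Eye_{n_i}+m_pK_{ii}\right)u_3=-m_pK_{ig}u_1$, and since $\lambda$ satisfies Assumption\,\ref{AS:slower} the matrix $\left(\lambda\Eye_{n_i}+m_pK_{ii}\right)$ is invertible, so $u_3=-m_p\left(\lambda\Eye_{n_i}+m_pK_{ii}\right)^{-1}K_{ig}u_1$. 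Substituting this and $u_2=\lambda u_1$ into the middle row and collecting terms gives exactly $\Lambda(\lambda,m_p)\,u_1=\Zero$.

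From here I would extract the two assertions of the lemma. For singularity of $\Lambda(\lambda,m_p)$: since $\lambda$ is an eigenvalue there is a nonzero $u$, and its top block $u_1$ cannot vanish --- otherwise $u_2=\lambda u_1=\Zero$ and $u_3=-m_p\left(\lambda\Eye_{n_i}+m_pK_{ii}\right)^{-1}K_{ig}u_1=\Zero$, so $u=\Zero$, a contradiction. Hence $\Ker{\Lambda(\lambda,m_p)}$ contains a nonzero vector and $\Lambda(\lambda,m_p)$ is singular. For the ``moreover'' part: given any nonzero $u_*\in\Ker{\Lambda(\lambda,m_p)}$, set $u_1=u_*$, $u_2=\lambda u_*$, $u_3=-m_p\left(\lambda\Eye_{n_i}+m_pK_{ii}\right)^{-1}K_{ig}u_*$, i.e.\ the displayed formula for $u$; reading the three block identities above in reverse then verifies $Au=\lambda u$, and $u\neq\Zero$ because its top block equals $u_*$.

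For the left eigenvector I would run the identical elimination on the relation $v^\top A=\lambda v^\top$, i.e.\ on $A^\top v=\lambda v$, using the symmetries $K_{gg}^\top=K_{gg}$, $K_{ii}^\top=K_{ii}$, $K_{ig}^\top=K_{gi}$ and the fact that $M$ and $D$ are diagonal (so $DM^{-1}=M^{-1}D$ and $\left(\lambda\Eye_{n_i}+m_pK_{ii}\right)$ is symmetric). Writing $v=\begin{bmatrix}v_1^\top & v_2^\top & v_3^\top\end{bmatrix}^\top$, one block row gives $v_1=\left(\lambda\Eye_{n_g}+M^{-1}D\right)v_2$, another --- again invoking the invertibility in Assumption\,\ref{AS:slower} --- gives $v_3=-\left(\lambda\Eye_{n_i}+m_pK_{ii}\right)^{-1}K_{gi}^\top M^{-1}v_2$, and substituting these into the remaining block row reduces to $v_2$ being a left null vector of $\Lambda(\lambda,m_p)$, i.e.\ $\Lambda(\lambda,m_p)^\top v_2=\Zero$. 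Setting $v_*=v_2$ reproduces the stated expression for $v$; since $M\Lambda(\lambda,m_p)=\Lambda(\lambda,m_p)^\top M$, this is the same as $v_*\in M\,\Ker{\Lambda(\lambda,m_p)}$, and in particular $\Lambda(\lambda,m_p)$ and $\Lambda(\lambda,m_p)^\top$ are singular together, consistent with the first part.

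I do not anticipate a genuine obstacle: the argument is a Schur-complement elimination in disguise and is entirely mechanical. The only step that needs care is the middle-block substitution for the right eigenvector (and its mirror for the left one) --- one must check that eliminating $u_3$ produces precisely the term $-m_p M^{-1}K_{gi}\left(\lambda\Eye_{n_i}+m_pK_{ii}\right)^{-1}K_{ig}$ sitting inside $\Lambda(\lambda,m_p)$, with the correct sign and with $M^{-1}$ factored on the left exactly as written. I would also state at the outset that the entire lemma is conditioned on Assumption\,\ref{AS:slower}, which is exactly what makes every inverse appearing in $\Lambda(\lambda,m_p)$ and in the eigenvector formulas well defined; no other assumption of Section\,\ref{sec:model} is needed.
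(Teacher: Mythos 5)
Your proof is correct and follows essentially the same route as the paper's: block-partition the eigenvalue equation, eliminate the inverter-angle block using the invertibility of $\left(\lambda\Eye_{n_i}+m_pK_{ii}\right)$ guaranteed by Assumption\,\ref{AS:slower}, and reduce the remaining block row to the singularity of $\Lambda(\lambda,m_p)$, with the left eigenvector obtained by the mirrored elimination. You are in fact slightly more careful than the paper on two points --- checking that the top block $u_1$ cannot vanish (so that $\Ker{\Lambda(\lambda,m_p)}$ is genuinely nontrivial), and noting that the left-eigenvector component $v_2$ lies in $\Ker{\Lambda(\lambda,m_p)^\top}=M\,\Ker{\Lambda(\lambda,m_p)}$ rather than in $\Ker{\Lambda(\lambda,m_p)}$ itself, a distinction the paper's proof and statement gloss over.
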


\begin{proof} Let us consider that $v  = \begin{bmatrix}
    v_1^\top & v_2^\top & v_3^\top
\end{bmatrix}^\top$ is a left eigenvector of $A$ for the eigenvalue $\lambda$, for some pair of $n_g$-dimensional vectors $v_1,\,v_2\,\in\mathbb{C}^{n_g}$ and $n_i$-dimensional vector $v_3\in\mathbb{C}^{n_i}$\,, i.e., $v^\top  A = \lambda~ v^\top$\,. Therefore, we have:
\begin{equation}\label{eq:left_eigen}
    \begin{aligned}
        - v_2^\top\,M^{-1}K_{gg} - v_3^\top\,m_p\,K_{ig} &= \,\lambda~v_1^\top\\
        v_1^\top - v_2^\top\,M^{-1}{D} \, &= \, \lambda~v_2^\top\\
        -v_2^\top\,M^{-1}K_{gi} - v_3^\top\,m_p\,K_{ii}\, &= \, \lambda~v_3^\top
    \end{aligned}
\end{equation}
Solving the second and the third equality conditions yield:
\begin{align*}
     v_1 &= (\lambda\,\Eye_{n_g} + M^{-1}{D})\,v_2\,,\\
     v_3 &= -\left(\lambda\,\Eye_{n_i} + m_p\,K_{ii}\right)^{-1}K_{gi}^\top\,M^{-1}\,v_2\,.
\end{align*}
which leads to the aforementioned left eigenvector $v$\,. Using the above identities in the first equality condition in \eqref{eq:left_eigen}, we obtain the following:
\begin{align*}
        \Lambda(\lambda,m_p)\,v_2&=\Zero\,,
    \end{align*}
    i.e., $\Lambda(\lambda,m_p)$ must be singular, with $v_2\!\in\!\Ker{\Lambda(\lambda,m_p)}$\,. This proves the first part of the lemma. Next, considering any $v_*\in\Ker{\Lambda(\lambda,m_p)}$\,, and setting $v_2=v_*$\,, we prove that $v  = \begin{bmatrix}
    v_1^\top & v_2^\top & v_3^\top
\end{bmatrix}^\top$ is a left eigenvector of $A$\,, associated with the eigenvalue $\lambda$\,. The right eigenvector is obtained in a similar fashion, and the details are omitted for brevity.
\end{proof}

In the following, we characterize the sensitivities of these dominant modes as the penetration of grid-forming storage resources increases, i.e., we analyze how the eigenvalues $\lambda\in\mathbb{C}$ of interest of the matrix $A$ change as $m_p$ varies. In particular, the inter-area modes of oscillation of interest for this study are identified by their distinct frequencies which are largely determined by the inertia of the synchronous generators, and do not vary (significantly) with the changes in grid-forming storage droop settings and/or their size, i.e.,
\begin{assumption}[Inter-Area Modes]\label{AS:inter}
    The eigenvalues $\lambda\in\mathbb{C}$ corresponding to the inter-area modes of oscillation will satisfy:
    \begin{align*}
        \lim_{\Delta m_p\rightarrow 0^+}\Delta\Imag{\lambda} \approx 0\,.
    \end{align*}
\end{assumption}
 
\subsection{Parametric Sensitivity of Slow Eigenvalues}

% 
%%%%%%%%%%%%%%%%%%%%%%%%%%%%%%%%%%%%%%%%%%%%%%%
%========= INTERMEDIATE RESULT ===============%
% 
\replace{\input{sections/damping/cdc_damping/discard/old_lemma_intermediate_modified}}{Let us define the following matrices:
\begin{equation}\label{eq:URQTheta}
    \begin{aligned}
        \!\!U_1&:=K_{gi}\,K_{ii}^{-2}\,K_{ig} &\in\mathbb{R}^{n_g\times n_g}\\
        \!\!U_2&:=K_{gi}\,K_{ii}^{-3}\,K_{ig} &\in\mathbb{R}^{n_g\times n_g}\\
        \!\!Q(\lambda)&:=2\lambda M+D=Q(\lambda)^\top   &\in\mathbb{C}^{n_g\times n_g}\\
        \!\!\Theta_1(\lambda)&:=U_1\left(2\,\Abs{\lambda}^2M+\lambda D\right)&\in\mathbb{C}^{n_g\times n_g}\\
        \!\!\Theta_2(\lambda)&:=\lambda\,U_1^2+2\,\Abs{\lambda}^2U_2\left(2\,\Conj{\lambda}M\!+\! D\right)&\in\mathbb{C}^{n_g\times n_g}\\
        \!\!R(\lambda, m_p)&:=K_{gi}\left(\lambda\,\Eye_{n_i}+m_pK_{ii}\right)^{-2}K_{ig}&\in\mathbb{C}^{n_g\times n_g}\\
        \!\!\Theta(\lambda,m_p)&:=\lambda\,R\left(m_p\,\Conj{R}+\Conj{Q}\right)&\in\mathbb{C}^{n_g\times n_g}
    \end{aligned}
\end{equation}
The conditions in \eqref{eq:lin_Pg_Pi} and \eqref{eq:Kii_up} imply that $U_1\!\succ\!\Zero\,,U_2\!\succ\!\Zero\,,$ and $R$ is symmetric. We establish the following relationships:
\begin{proposition}\label{prop:intermediate_result}
    For slower eigenvalues $\lambda\in\mathbb{C}$ satisfying Assumption\,\ref{AS:slower}, and for sufficiently large $m_p$\,, the matrices $R$ and $\Theta$ can be approximated as follows:
    \begin{align*}        
    \lim_{m_p\rightarrow\infty} R(\lambda,m_p)&\approx \frac{1}{m_p^3}\left(m_p\,U_1 + 2\,\Conj{\lambda}\,U_2 + \Order{1/m_p}\right)\,,\\
    \lim_{m_p\rightarrow\infty} \Theta(\lambda,m_p)&\approx \frac{1}{m_p^3}\left(m_p\,\Theta_1(\lambda) + \Theta_2(\lambda) + \Order{1/m_p}\right)
    \end{align*} 
\end{proposition}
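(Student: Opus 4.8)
The plan is to expand the matrix inverse $(\lambda\Eye_{n_i}+m_pK_{ii})^{-1}$ as a convergent Neumann (geometric) series in powers of $1/m_p$, substitute a truncation of this series into the definitions of $R$ and then $\Theta$ from \eqref{eq:URQTheta}, and collect terms by order of $1/m_p$. The key enabling fact is supplied by \eqref{eq:Kii_up}: $K_{ii}$ is symmetric with $\gamma_l\Eye_{n_i}\preceq K_{ii}$, so it is invertible with $\|K_{ii}^{-1}\|_2\le 1/\gamma_l$; combined with Assumption\,\ref{AS:slower} (which guarantees both that $\lambda\Eye_{n_i}+m_pK_{ii}$ is invertible and that $\Abs{\lambda}$ is small relative to $m_p\|K_{ii}\|_2$), and with the fact that the inter-area eigenvalues of interest range over a bounded set of frequencies, all the $\Order{\cdot}$ estimates below hold uniformly in $\lambda$ once $m_p$ is sufficiently large.

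First I would factor $\lambda\Eye_{n_i}+m_pK_{ii}=m_pK_{ii}\left(\Eye_{n_i}+(\lambda/m_p)K_{ii}^{-1}\right)$. For $m_p>\Abs{\lambda}/\gamma_l$ the second factor has the convergent inverse $\Eye_{n_i}-(\lambda/m_p)K_{ii}^{-1}+\Order{1/m_p^2}$, whence
\begin{align*}
    (\lambda\Eye_{n_i}+m_pK_{ii})^{-1}=\frac{1}{m_p}K_{ii}^{-1}-\frac{\lambda}{m_p^2}K_{ii}^{-2}+\Order{1/m_p^3}\,.
\end{align*}
Squaring (the terms are polynomials in $K_{ii}^{-1}$ and hence commute) gives
\begin{align*}
    (\lambda\Eye_{n_i}+m_pK_{ii})^{-2}=\frac{1}{m_p^2}K_{ii}^{-2}-\frac{2\lambda}{m_p^3}K_{ii}^{-3}+\Order{1/m_p^4}\,,
\end{align*}
and pre-/post-multiplying by $K_{gi}$ and $K_{ig}=K_{gi}^\top$ (see \eqref{eq:lin_Pg_Pi}) and invoking the definitions of $U_1,U_2$ yields $R=\frac{1}{m_p^2}U_1-\frac{2\lambda}{m_p^3}U_2+\Order{1/m_p^4}$. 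Because the inter-area modes are lightly damped, $\lambda$ lies close to the imaginary axis, so $-\lambda$ and $\Conj{\lambda}$ coincide up to a relative error of the order of the modal damping ratio; absorbing this into the ``$\approx$'' and factoring out $1/m_p^3$ gives the first claimed expansion.

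Next I would substitute this expansion of $R$, together with its conjugate $\Conj{R}=\frac{1}{m_p^2}U_1-\frac{2\Conj{\lambda}}{m_p^3}U_2+\Order{1/m_p^4}$ (valid since $U_1,U_2,M,D,m_p$ are real), into $\Theta=\lambda R\left(m_p\Conj{R}+\Conj{Q}\right)$ with $\Conj{Q}=2\Conj{\lambda}M+D$, expand the product, and keep terms through order $1/m_p^3$. The $\Order{1/m_p^2}$ contribution is $\frac{\lambda}{m_p^2}U_1(2\Conj{\lambda}M+D)=\frac{1}{m_p^2}U_1\left(2\Abs{\lambda}^2M+\lambda D\right)=\frac{1}{m_p^2}\Theta_1(\lambda)$, exactly, using $\lambda\Conj{\lambda}=\Abs{\lambda}^2$. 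The $\Order{1/m_p^3}$ contribution collects the cross term $\lambda\cdot\frac{1}{m_p^2}U_1\cdot\frac{1}{m_p}U_1=\frac{\lambda}{m_p^3}U_1^2$ arising from $m_pR\Conj{R}$ together with $\lambda\cdot\left(-\frac{2\lambda}{m_p^3}U_2\right)(2\Conj{\lambda}M+D)=-\frac{2\lambda^2}{m_p^3}U_2(2\Conj{\lambda}M+D)$; replacing once more $-\lambda^2$ by $\Abs{\lambda}^2$ for the lightly-damped modes, this equals $\frac{1}{m_p^3}\Theta_2(\lambda)$. All remaining products are $\Order{1/m_p^4}=\frac{1}{m_p^3}\Order{1/m_p}$, which completes the second claimed expansion.

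The only genuine obstacle is the bookkeeping: one must carefully track which products of series terms contribute at orders $1/m_p^2$ and $1/m_p^3$, confirm that the neglected tails sum to honest $\Order{1/m_p^4}$ terms uniformly over the bounded set of inter-area eigenvalues, and check that the residual distinguishing $-\lambda$ from $\Conj{\lambda}$ (and $-\lambda^2$ from $\Abs{\lambda}^2$) is small enough to be legitimately absorbed by the approximate equality in the statement. The supporting matrix facts --- invertibility and norm-bound of $K_{ii}$ from \eqref{eq:Kii_up}, commutativity of powers of $K_{ii}^{-1}$, symmetry of $R$, and positive definiteness of $U_1$ and $U_2$ --- are immediate from the structure already established before the proposition.
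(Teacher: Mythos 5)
Your proof is correct and arrives at the same expansions, but by a genuinely different expansion technique, and the comparison is instructive. The paper does not use a Neumann series: it rationalizes the resolvent by writing $\left(\lambda\Eye_{n_i}+m_pK_{ii}\right)^{-2}=\left(\Conj{\lambda}\Eye_{n_i}+m_pK_{ii}\right)^{2}\left(\Abs{\lambda}^2\Eye_{n_i}+2\Real{\lambda}m_pK_{ii}+m_p^2K_{ii}^2\right)^{-2}$, approximates the now real symmetric denominator by $\left(m_p^2K_{ii}^2+\Order{m_p}\right)^{-2}$, and reads off the coefficient $+2\Conj{\lambda}U_2$ at order $1/m_p^3$. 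Your Neumann series yields the exact coefficient $-2\lambda U_2$ at that order; the difference, $-4\Real{\lambda}U_2/m_p^3$, is precisely the contribution of the $\Order{m_p}$ part of the denominator that the paper discards even though it enters at the same order as the term it retains. You flag this mismatch and absorb it into the ``$\approx$'' by invoking light damping ($\Conj{\lambda}\approx-\lambda$ near the imaginary axis), which is the honest way to reconcile the two statements; the paper leaves that step implicit. The same remark applies to your replacement of $-\lambda^2$ by $\Abs{\lambda}^2$ in assembling $\Theta_2$. Otherwise your argument tracks the appendix closely: invertibility and uniformity come from \eqref{eq:Kii_up} and Assumption\,\ref{AS:slower} in both versions, and the expansion of $\Theta$ collects the identical terms (the leading $\lambda U_1\Conj{Q}/m_p^2=\Theta_1/m_p^2$, then $\lambda U_1^2/m_p^3$ from the $m_pR\Conj{R}$ product and the $U_2\Conj{Q}$ cross term). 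What the paper's conjugate trick buys is a manifestly well-defined real positive definite denominator; what your series buys is an exact identification of what is being discarded, which makes visible that the proposition's coefficients are exact only for undamped (purely imaginary) modes and otherwise hold to within a relative error on the order of the modal damping ratio.
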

\begin{proof}    
    See Appendix\,\ref{proof:prop_R} for details.
\end{proof}

}
%%%%%%%%%%%%%%%%%%%%%%%%%%%%%%%%%%%%%%%%%%%%%%%

We are now ready to present our main result which offers a set of necessary conditions for the enhancement of damping of the oscillatory modes of interest due to increase in the size of the grid-forming storage units.

%%%%%%%%%%%%%%%%%%%%%%%%%%%%%%%%%%%%%%%%%%%%%%%
%========== MAIN RESULTS =============%

% \input{sections/damping/theorem_paper}
\begin{theorem}[Damping]\label{thm:main_result}
    Consider any of the slower eigenvalues $\lambda\in\mathbb{C}$ that satisfies Assumption\,\ref{AS:slower} and Assumption\,\ref{AS:inter}. For sufficiently large $m_p$, a necessary condition for $\Real{\lambda}$ to decrease due to an incremental decrease in $m_p$ is given by:
    \begin{align}\label{eq:condition_damping}
    m_p\,\lmax{\Theta_{1,h}(\lambda)}+\lmax{\Theta_{2,h}(\lambda)}>0\,,
    \end{align}
    where $\Theta_{1,h}(\lambda)$ and $\Theta_{2,h}(\lambda)$ denote, respectively, the Hermitian components of $\Theta_1(\lambda)$ and $\Theta_2(\lambda)$\,.
\end{theorem}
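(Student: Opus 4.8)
The plan is to perturb the singularity condition from Lemma~\ref{lem:eigenvectors} with respect to $m_p$ and extract a scalar sensitivity identity for $\Real{\lambda}$ by pairing with the kernel vectors. Concretely, let $v_*\in\Ker{\Lambda(\lambda,m_p)}$ be a unit kernel vector, normalized so $v_*^H v_*=1$. Since $\Lambda(\lambda,m_p)v_*=\Zero$ for the eigenvalue $\lambda$, differentiating the scalar relation $v_*^H\Lambda(\lambda,m_p)v_*=0$ along the one-parameter family $(\lambda(m_p),m_p)$ and using that $v_*$ lies in the kernel (so the terms involving $\dot v_*$ drop out by the standard left-right eigenvector argument, noting $\Lambda$ is, up to the left factor $M^{-1}$, the relevant pencil) gives
\[
\frac{\partial}{\partial\lambda}\!\left(v_*^H\Lambda v_*\right)\dot\lambda \;+\; \frac{\partial}{\partial m_p}\!\left(v_*^H\Lambda v_*\right)\;=\;0 .
\]
Here $\frac{\partial\Lambda}{\partial\lambda}=2\lambda\Eye_{n_g}+\Inv{M}D + \Inv{M}K_{gi}\frac{\partial}{\partial\lambda}\!\big[m_p(\lambda\Eye_{n_i}+m_pK_{ii})^{-1}\big]K_{ig}$ and $\frac{\partial\Lambda}{\partial m_p}=-\Inv{M}K_{gi}\frac{\partial}{\partial m_p}\!\big[m_p(\lambda\Eye_{n_i}+m_pK_{ii})^{-1}\big]K_{ig}$; both derivatives of the resolvent block are computed via $\frac{d}{ds}X^{-1}=-X^{-1}\dot X X^{-1}$ and organized in terms of $R(\lambda,m_p)=K_{gi}(\lambda\Eye_{n_i}+m_pK_{ii})^{-2}K_{ig}$ and related matrices from \eqref{eq:URQTheta}. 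The point of the definition $\Theta(\lambda,m_p)=\lambda R(m_p\Conj R+\Conj Q)$ is that, after multiplying through by appropriate conjugate factors to clear the $\Conj\lambda$'s hidden in $v_*^H M \cdot$ pairings, the sign of $\Real{\dot\lambda}$ is controlled by $\Real\!\big(v_*^H\Theta(\lambda,m_p)v_*\big)$ — I would verify this bookkeeping carefully, since it is where the specific combination in $\Theta$ is forced.

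Next I would invoke Assumption~\ref{AS:inter}: along the inter-area family, $\Imag{\lambda}$ is (to first order) constant, so $\dot\lambda$ is real, hence $\Real{\dot\lambda}=\dot\lambda$ and the quantity $\dot\lambda\cdot\big(\text{positive factor from }\partial\Lambda/\partial\lambda\big)$ has the same sign as a \emph{real} scalar of the form $v_*^H(\text{matrix})v_*$. Because that scalar is real, Proposition~\ref{fact3} lets me replace the matrix by its Hermitian part without changing the value. Then the condition ``$\Real{\lambda}$ decreases as $m_p$ decreases'' becomes ``$\Real{\dot\lambda}>0$'', which after the sign bookkeeping is equivalent to $v_*^H\Theta_h(\lambda,m_p)v_*>0$ for the relevant kernel vector $v_*$. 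A necessary condition for this to be possible for \emph{some} kernel vector is $\lmax{\Theta_h(\lambda,m_p)}>0$, by the Rayleigh quotient characterization of the maximum eigenvalue of a Hermitian matrix.

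The final step is the large-$m_p$ asymptotics. By Proposition~\ref{prop:intermediate_result}, $\Theta(\lambda,m_p)\approx \tfrac{1}{m_p^3}\big(m_p\Theta_1(\lambda)+\Theta_2(\lambda)+\Order{1/m_p}\big)$, and taking Hermitian parts (an $\mathbb{R}$-linear operation that commutes with the expansion) gives $\Theta_h(\lambda,m_p)\approx \tfrac{1}{m_p^3}\big(m_p\Theta_{1,h}(\lambda)+\Theta_{2,h}(\lambda)+\Order{1/m_p}\big)$. Since $m_p^{-3}>0$ and $\lmax{\cdot}$ is continuous and positively homogeneous, the sign of $\lmax{\Theta_h(\lambda,m_p)}$ for sufficiently large $m_p$ agrees with the sign of $\lmax{m_p\Theta_{1,h}(\lambda)+\Theta_{2,h}(\lambda)}$ (using that $\lmax{}$ of a sum is within the sum of the individual $\lmax{}$'s and $\lmin{}$'s, so the $\Order{1/m_p}$ tail is negligible relative to the $\Order{1}$ leading behavior whenever the leading expression is bounded away from $0$). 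This yields precisely \eqref{eq:condition_damping}.

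The main obstacle I anticipate is the middle step: getting the sign bookkeeping right so that the real sensitivity $\Real{\dot\lambda}$ is genuinely governed by $v_*^H\Theta_h v_*$ with the correct sign and no stray positive/negative scalar factors that could flip the inequality. This requires carefully tracking the factor $\big(v_*^H(\partial\Lambda/\partial\lambda)v_*\big)$ — in particular showing its real part is positive for the modes of interest (plausible from $2\lambda\Eye+\Inv M D$ dominating for lightly damped oscillatory $\lambda$, plus the smallness of $R$ from Assumption~\ref{AS:slower}) — and the conjugation structure that produces $\Conj R$ and $\Conj Q$ rather than $R$ and $Q$ inside $\Theta$. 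A secondary subtlety is justifying that $\dot v_*$ contributes nothing: this needs the appropriate biorthogonality for the pencil $\Lambda$, which follows because $v_*$ spans both the left and right kernel of the (symmetric-up-to-$\Inv M$) matrix $\Lambda$, so $v_*^H\Lambda\,\dot v_* = 0$ and its conjugate vanish.
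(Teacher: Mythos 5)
Your overall architecture matches the paper's: obtain a sensitivity formula of the form $d\lambda/dm_p=\lambda\,(\text{quadratic form in }R)/(\text{quadratic form in }m_pR+Q)$, use Assumption\,\ref{AS:inter} to conclude the derivative is real, invoke Proposition\,\ref{fact3} to pass to the Hermitian part, bound the resulting real quadratic form by \lmax{\cdot}\,, and finish with the large-$m_p$ expansion of Proposition\,\ref{prop:intermediate_result}. The paper derives the sensitivity formula from the eigenvalue perturbation identity $\Delta\lambda=v^\top\Delta A\,u/(v^\top u)$ applied to $A$ with the explicit eigenvectors of Lemma\,\ref{lem:eigenvectors}, whereas you differentiate the reduced pencil $\Lambda(\lambda,m_p)$ directly; these routes are equivalent and both should land on $d\lambda/dm_p=\lambda\,v_*^\top\Inv{M}R\,u_*/\bigl(v_*^\top\Inv{M}(m_pR+Q)u_*\bigr)$.

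The gap is in the pairing you use to kill the $\dot v_*$ terms, and it propagates into the step you yourself flag as the crux. You differentiate $v_*^H\Lambda v_*=0$ and assert $v_*^H\Lambda\,\dot v_*=0$ because ``$v_*$ spans both the left and right kernel.'' That is false for the sesquilinear pairing: $M\Lambda$ is complex \emph{symmetric}, so $\Lambda v_*=\Zero$ gives the transpose relation $(Mv_*)^\top\Lambda=\Zero$, not $v_*^H\Lambda=\Zero$ (the latter would require $\Inv{M}\Conj{v}_*\in\Ker{\Lambda}$, which fails for genuinely complex $v_*$). The invariant pairing is the bilinear one, $u_*^\top M\Lambda v_*$, and differentiating it yields the transpose-based ratio above with \emph{no} conjugates anywhere. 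The conjugates $\Conj{R},\Conj{Q}$ inside $\Theta$ --- which you correctly identify as the place where ``the specific combination in $\Theta$ is forced'' and then leave unverified --- do not come from the pencil differentiation at all: they appear only afterward, when one multiplies the numerator and denominator of the bilinear ratio by the complex conjugate of the denominator, turning it into $\lambda\,w_*^HR\,U_*\left(m_p\Conj{R}+\Conj{Q}\right)w_*/\left|\,w_*^H\left(m_pR+Q\right)u_*\right|^2$ with $w_*=\Inv{M}\Conj{v}_*$ and $U_*=u_*^{}u_*^H$. Only then is the quantity a genuine Hermitian quadratic form to which Proposition\,\ref{fact3} and the Rayleigh bound apply, and the necessary condition must still absorb the factor $U_*\succeq\Zero$ before it reduces to $\lmax{\Theta_h}>0$. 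Without that conjugation step your argument never produces the matrix $\Theta$ whose Hermitian part appears in \eqref{eq:condition_damping}, so the proof is incomplete at exactly the anticipated obstacle; the final asymptotic reduction via Proposition\,\ref{prop:intermediate_result} is fine.
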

\begin{proof}
Let $\Delta m_p$ be the incremental perturbation in the droop coefficient of each inverter. Consequently, the perturbation in the system matrix $A$ is expressed as
\begin{equation*}
    \Delta A \, = \, \begin{bmatrix}
        \Zero & \Zero & \Zero\\
        \Zero & \Zero & \Zero\\
        -K_{ig} & \Zero & -K_{ii}
    \end{bmatrix} \, \Delta m_p.
\end{equation*}
Let $(\lambda\,+\,\Delta\lambda)$ be the eigenvalue of the perturbed matrix $(A\,+\,\Delta A)$. The perturbation in the eigenvalues can be quantified as
\begin{equation} \label{sensitivity_1}
    \lim_{\Delta m_p\rightarrow 0^+}\Delta \lambda \, = \, \lim_{\Delta m_p\rightarrow 0^+}\frac{v^\top\, \Delta A\, u}{v^\top u}\,.
\end{equation}    
where $v$ and $u$ are the corresponding left and right eigenvectors. Using the eigenvectors from Lemma\,\ref{lem:eigenvectors}, and after some algebraic manipulation, we obtain the following:
\replace{\input{sections/damping/cdc_damping/discard/old_proof_theorem1}
}{
\begin{equation}\label{sensitivity_2}
\begin{aligned}
    \Diff{\lambda}{m_p}&:=\lim_{\Delta m_p\rightarrow 0^+}\frac{\Delta \lambda}{\Delta m_p}=\lambda\,\frac{v_*^\top\,\Inv{M}R\,u_*}{v_*^\top\Inv{M}\left(m_p\,R+Q\right)u_*}\\
    &= \lambda\,\frac{w_*^HR\,U_*\left(m_p\Conj{R}+\Conj{Q}\right)w_*}{\left|\,w_*^H\left(m_pR+Q\right)u_*\right|^2}
\end{aligned}    
\end{equation}
where we define $w_*^{}\!:=\!\Inv{M}\,\Conj{v}_*$ and $U_*\!:=\!u_*^{}u_*^H$\,. Further, recall that $R$ and $Q$ are defined in \eqref{eq:URQTheta} and that $v_*,u_*\!\in\!\Ker{\Lambda(\lambda,m_p)}$\,. The final expression in \eqref{sensitivity_2} is obtained by multiplying the numerator and the denominator by the conjugate (transpose) of the denominator.

As per Assumption\,\ref{AS:inter}, we are interested in the eigenvalues for which the imaginary parts do not change (significantly), i.e., when $d\lambda/dm_p$ is real. Applying Proposition\,\ref{fact3}, we argue:
\begin{align*}
    \Diff{\lambda}{m_p}~\text{is real}&\iff w_*^H\left[\lambda\,R\,U_*\left(m_p\Conj{R}+\Conj{Q}\right)\right]w_*~\text{is real}\\
    &\implies w_*^H\left[\lambda\,R\,U_*\left(m_p\Conj{R}+\Conj{Q}\right)\right]_hw_*~\text{is real}
\end{align*}
Moreover, $d\lambda/dm_p$ must take positive values for $\Real{\lambda}$ to decrease due to an incremental decrease in $m_p$ for every eigenvalue of interest $\lambda$ satisfying Assumption\,\ref{AS:inter}. Therefore:
\begin{align*}
    \Diff{\lambda}{m_p}>0&\implies w_*^H\left[\lambda\,R\,U_*\left(m_p\Conj{R}+\Conj{Q}\right)\right]_hw_*>0\\
    &\implies \lmax{\left[\lambda\,R\,U_*\left(m_p\Conj{R}+\Conj{Q}\right)\right]_h}>0\\
    &\implies \lmax{\Theta_h(\lambda,m_p)}>0
\end{align*}
where, the last condition follows from the positive semi-definiteness of $U_*$\,, and  $\Theta_h(\lambda,m_p)$ denotes the Hermitian component of $\Theta(\lambda, m_p)$ introduced in \eqref{eq:URQTheta}. On application of Proposition\,\ref{prop:intermediate_result}, we obtain the following necessary condition:
\begin{align*}
    \lim_{m_p\rightarrow \infty}\Diff{\lambda}{m_p}>0&\implies \lmax{m_p\Theta_{1,h}(\lambda)+\Theta_{2,h}(\lambda)}>0
\end{align*}
which completes the proof.}
\end{proof}

\subsection{Analytical Design Insights}
While Theorem\,\ref{thm:main_result} provides a necessary condition for stability, it is possible to synthesize derived conditions that provide useful design insights. Let us introduce the following:
\begin{equation}\label{eq:define_design_variables}
\begin{aligned}
    D^*(\lambda)&:={ \frac{M_u\left(1/\gamma_u+4\Abs{\lambda}^2M_l\right)}{\gamma_u/\gamma_l}}\\
    \zeta_l(\lambda)&:=\frac{2\,\Abs{\lambda}\,D_u\left(\gamma_u/\gamma_l\right)^3}{1/\gamma_u+4\,\Abs{\lambda}^2\,M_l}\\
    \zeta_u(\lambda)&:=\frac{2\,\Abs{\lambda}\,M_u\left(\gamma_u/\gamma_l\right)^2}{D_l}
\end{aligned}        
\end{equation}
where $\gamma_l,\gamma_u$ are defined in \eqref{eq:Kii_up} and $M_u,D_u$ are from \eqref{eq:MD_maxmin}.

\begin{theorem}[Design Insight]\label{thm:design}
    Consider inter-area oscillatory modes characterized in Assumption\,\ref{AS:inter} and satisfying the following condition (with $\zeta(\lambda)$ defined in \eqref{eq:define_damping}):
    \begin{align}
        D_u\,D_l&< D^*(\lambda)\,, \,\text{ and }\,~\,\zeta_l(\lambda)< \zeta(\lambda)< \zeta_u(\lambda)\,.\label{eq:design_assumptions}
    \end{align}
    Then, for the condition in Theorem\,\ref{thm:main_result} to hold, the droop control parameter $m_p$ must satisfy the following lower bound: 
    \begin{align}\label{eq:condition_design}
        m_p>m^*(\lambda):=\frac{\left(1+4\Abs{\lambda}^2M_l\,\gamma_u\right)\left(\zeta(\lambda)\!-\!\zeta_l(\lambda)\right)}{\gamma_u^2\,D_l\left(\zeta_u(\lambda)-\zeta(\lambda)\right)}\,.
    \end{align}        
\end{theorem}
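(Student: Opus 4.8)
The plan is to start from the necessary condition in Theorem~\ref{thm:main_result}, namely $m_p\,\lmax{\Theta_{1,h}(\lambda)}+\lmax{\Theta_{2,h}(\lambda)}>0$, and to bound the two maximum eigenvalues in terms of the scalar quantities $\gamma_l,\gamma_u,M_u,M_l,D_u,D_l$ and $\Abs{\lambda}$. Recall $\Theta_1(\lambda)=U_1\left(2\Abs{\lambda}^2M+\lambda D\right)$ and $\Theta_2(\lambda)=\lambda U_1^2+2\Abs{\lambda}^2U_2\left(2\Conj{\lambda}M+D\right)$, with $U_1=K_{gi}K_{ii}^{-2}K_{ig}\succ\Zero$ and $U_2=K_{gi}K_{ii}^{-3}K_{ig}\succ\Zero$. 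The first step is to extract the Hermitian parts: since $\Theta_{1,h}=\Symm{\Theta_1}$ and the factor $2\Abs{\lambda}^2M+\lambda D$ has Hermitian part $2\Abs{\lambda}^2M+\Real{\lambda}D$, I would write $\Theta_{1,h}(\lambda)=\Symm{U_1\left(2\Abs{\lambda}^2M+\Real{\lambda}D\right)}+\Imag{\lambda}\Symm{\,\text{(skew piece)}}$, being careful that for a product of a symmetric positive-definite matrix with a Hermitian matrix the symmetrized product's extreme eigenvalues are controlled by $\lmax{U_1}$ times the extreme eigenvalues of the Hermitian factor (via the standard bound $\lmax{\Symm{AB}}\le\|A\|_2\,\lmax{B_h}$ when $A\succ\Zero$, or more carefully a Bhatia-type estimate). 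The key realization is that $\Real{\lambda}<0$ for a damped mode, so the $\Real{\lambda}D$ term is negative-definite and works against the condition, which is exactly where the damping ratio $\zeta(\lambda)=\Abs{\Real{\lambda}}/\Abs{\lambda}$ enters.

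Next I would bound $\lmax{U_1}$ and the relevant eigenvalue ranges of $U_2$. Using $\gamma_l\Eye_{n_i}\preceq K_{ii}\preceq\gamma_u\Eye_{n_i}$ from \eqref{eq:Kii_up}, one gets $K_{ii}^{-2}\preceq\gamma_l^{-2}\Eye_{n_i}$ and $K_{ii}^{-2}\succeq\gamma_u^{-2}\Eye_{n_i}$, hence $\lmax{U_1}\le\gamma_l^{-2}\lmax{K_{gi}K_{ig}}$ and $\lmin{U_1}\ge\gamma_u^{-2}\lmin{K_{gi}K_{ig}}$, and similarly for $U_2$ with exponent $3$; these combine into ratios of the form $(\gamma_u/\gamma_l)^2$ and $(\gamma_u/\gamma_l)^3$, which is where those powers in $\zeta_l,\zeta_u$ come from. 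The $K_{gi}K_{ig}$ spectral factors cancel in the ratio $U_2$ vs. $U_1$-type comparisons — I expect the argument is designed so that only the conditioning ratio $\gamma_u/\gamma_l$ survives, not the absolute scale of the $K$-matrices. Substituting the scalar bounds for $M$ (entries in $[M_l,M_u]$) and $D$ (entries in $[D_l,D_u]$) into the Hermitian-part estimates, the condition $m_p\,\lmax{\Theta_{1,h}}+\lmax{\Theta_{2,h}}>0$ becomes a scalar linear inequality in $m_p$ of the shape $m_p\,(a - b\,\zeta) + (c\,\zeta - d) > 0$ (roughly), where $a$ corresponds to the $2\Abs{\lambda}^2M$ part of $\Theta_{1,h}$, $b\,\zeta$ to the adverse $\Real{\lambda}D$ part, etc. Isolating $m_p$ and matching with \eqref{eq:define_design_variables} should yield $m_p>m^*(\lambda)$, with the numerator $(1+4\Abs{\lambda}^2M_l\gamma_u)$ being $\gamma_u$ times the bound on the $\Theta_{1,h}$ coefficient and $(\zeta(\lambda)-\zeta_l(\lambda))$ the net positive contribution from $\Theta_2$ after its adverse term is subtracted.

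The role of the hypotheses in \eqref{eq:design_assumptions} is to guarantee the rearrangement is valid: $\zeta(\lambda)<\zeta_u(\lambda)$ ensures the coefficient of $m_p$ (after bounding) is positive so we can divide by it keeping the inequality direction, $\zeta_l(\lambda)<\zeta(\lambda)$ ensures the constant term from $\Theta_2$ is net negative so that a genuine lower bound (rather than a vacuous one) is forced, and the condition $D_uD_l<D^*(\lambda)$ is what makes the window $(\zeta_l,\zeta_u)$ nonempty — indeed one can check $\zeta_l<\zeta_u$ is equivalent, after clearing denominators, to $D_uD_l<D^*(\lambda)$ with $D^*(\lambda)=M_u(1/\gamma_u+4\Abs{\lambda}^2M_l)\,\gamma_l/\gamma_u$. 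I would verify this equivalence as a short preliminary lemma so the theorem's standing hypothesis is self-consistent.

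The main obstacle I anticipate is the eigenvalue bound for the symmetrized product $\Symm{U_jH}$ where $U_j\succ\Zero$ is not a scalar multiple of identity and $H$ is Hermitian indefinite: $\lmax{\Symm{U_jH}}$ is \emph{not} simply $\lmax{U_j}\lmax{H}$ in general, and getting a clean one-sided bound that is still tight enough to produce the stated $m^*(\lambda)$ (rather than something looser) requires care — likely one uses $\lmax{\Symm{U_jH}}\le\lmax{U_j}\,\lmax{H_+}-\lmin{U_j}\,\Abs{\lmin{H_-}}$ type splitting into positive and negative spectral parts of $H$, or exploits that in the intended regime $H=2\Abs{\lambda}^2M+\Real{\lambda}D$ is sign-definite. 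Keeping track of which of $\lmax{U_j}$ or $\lmin{U_j}$ multiplies which sign-piece, so that the $\gamma_u/\gamma_l$ powers land exactly as in \eqref{eq:define_design_variables}, is the delicate bookkeeping step; everything else is substitution and rearrangement.
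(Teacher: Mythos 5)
Your plan follows essentially the same route as the paper's proof: write out $\Theta_{1,h}$ and $\Theta_{2,h}$ explicitly, bound their largest eigenvalues via $M_l,M_u,D_l,D_u$ together with the spectral bounds $\gamma_l\Eye_{n_i}\preceq K_{ii}\preceq\gamma_u\Eye_{n_i}$ applied to $U_1$ and $U_2$, and then rearrange the resulting scalar inequality in $m_p$ --- with $\zeta<\zeta_u$ making the coefficient of $m_p$ positive, $\zeta_l<\zeta$ making the constant term negative, and $D_uD_l<D^*$ being exactly the condition $\zeta_l<\zeta_u$, as you verify. The one obstacle you flag, namely that $\lmax{\Symm{U_jH}}$ is not controlled termwise by extreme eigenvalues of the noncommuting factors, is a genuine subtlety, but the paper's own proof simply asserts the corresponding one-sided bounds on $\lmax{\Theta_{1,h}}$ and $\lmax{\Theta_{2,h}}$ without resolving it, so your proposal is not missing any device the authors actually supply.
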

\begin{proof}
    The Hermitian components of $\Theta_1$ and $\Theta_2$ are given by:
    \begin{align*}
        \Theta_{1,h}&\!=\!\Abs{\lambda}^2\left(U_1M+MU_1\right)+\frac{1}{2}\left(\lambda U_1D+\Conj{\lambda}DU_1\right)\\
        \Theta_{2,h}&\!=\!\Real{\lambda}U_1^2+\Abs{\lambda}^2\!\left[U_2\left(2\Conj{\lambda}M\!+\!D\right)+\left(2\lambda M\!+\!D\right)U_2\right]
    \end{align*}
    Using \eqref{eq:MD_maxmin} and noticing that $\lambda U_1\!+\!\Conj{\lambda}U_1\!=\!2\,\Real{\lambda}\,U_1\!\prec\! 0$ and $\lambda U_2\!+\!\Conj{\lambda}U_2\!=\!2\,\Real{\lambda}\,U_2\!\prec\! 0$ (negative definite), we obtain:
    \begin{align*}
        \lmax{\Theta_{1,h}}&\leq 2\Abs{\lambda}^2M_u\lmax{U_1}-\Abs{\Real{\lambda}}D_l\lmin{U_1}\\
        \lmax{\Theta_{2,h}}&\leq -\Abs{\Real{\lambda}}\!\left[\lmin{U_1^2}+4\Abs{\lambda}^2M_l\lmin{U_2}\right]\\
        &\,\quad\,+2\Abs{\lambda}^2D_u\lmax{U_2}
    \end{align*}
    Applying the bounds from $\eqref{eq:Kii_up}$, we have:
    \begin{align*}
        \lmax{\Theta_{1,h}}&\leq \frac{\Abs{\lambda}\,D_l\,\|K_{gi}\|_2^2}{\gamma_u^2}\left(\zeta_u(\lambda)-\zeta(\lambda)\right)\\
        \lmax{\Theta_{2,h}}&\leq -\frac{\Abs{\lambda}\,\|K_{gi}\|_2^2}{\gamma_u^4}\!\left(1+4\Abs{\lambda}^2M_l\,\gamma_u\right)\left(\zeta(\lambda)\!-\!\zeta_l(\lambda)\right)
    \end{align*}
    Note that the condition $D_u D_l\!<\!D^*(\lambda)$ in \eqref{eq:design_assumptions} ensures that $\zeta_u(\lambda)\!>\!\zeta_l(\lambda)$\,. The rest of proof follows by using the conditions \eqref{eq:design_assumptions} and the above bounds on $\lmax{\Theta_{1,h}}$ and $\lmax{\Theta_{2,h}}$ to derive necessary condition for \eqref{eq:condition_damping} (in Theorem\,\ref{thm:main_result}) to hold.
\end{proof}
\begin{remark}
Note that the lower ($\zeta_l$) and upper ($\zeta_u$) bounds on the damping ratio are not too restrictive, when one considers sufficiently low damping ($D_l, D_u$) on the system.
\end{remark}
    The condition \eqref{eq:condition_design} in Theorem\,\ref{thm:design} in  establishes system-level insights into controller parameter tuning and sizing of grid-forming storage units into bulk grid. For example, the droop controller setting (i.e., $\hat{m}_p$) must be greater than a certain value determined by a combination of the network admittance parameters $\gamma_l,\gamma_u$ which are related to the system strength at the POI, as well as system inertia ($M_l,M_u$) and damping ($D_l,D_u$) parameters. We study the dependence in a simplified scenario:
    
    \begin{corollary}\label{cor:design}
        In the limiting case of low damping ($D\rightarrow \Zero^+$) in the network, and  uniform network admittance parameters at the POI, i.e., $\gamma_l=\gamma_u$\,, we have the following:
        \begin{align*}
            \left.\lim_{D\rightarrow \Zero^+}m^*(\lambda)\right|_{\gamma_l=\gamma_u}=\frac{\left(1+4\Abs{\lambda}^2M_l\,\gamma_u\right)}{2\Abs{\lambda}M_u\,\gamma_u^2}\,\zeta(\lambda)
        \end{align*}
    \end{corollary}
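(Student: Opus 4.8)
The plan is to specialize the expression for the lower bound $m^*(\lambda)$ in \eqref{eq:condition_design} by taking the two simplifying limits in sequence, tracking which terms survive. First I would set $\gamma_l = \gamma_u$, so that the ratio $\gamma_u/\gamma_l = 1$ throughout the auxiliary definitions in \eqref{eq:define_design_variables}. Under this substitution, $\zeta_l(\lambda) = \dfrac{2\Abs{\lambda}D_u}{1/\gamma_u + 4\Abs{\lambda}^2 M_l}$ and $\zeta_u(\lambda) = \dfrac{2\Abs{\lambda}M_u}{D_l}$, while $D^*(\lambda) = M_u\left(1/\gamma_u + 4\Abs{\lambda}^2 M_l\right)$.

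Next I would take the limit $D \to \Zero^+$, which by definition of $D_u, D_l$ in \eqref{eq:MD_maxmin} means both $D_u \to 0^+$ and $D_l \to 0^+$. In this limit, $\zeta_l(\lambda) \to 0$ (the numerator vanishes, denominator is bounded away from zero by $1/\gamma_u > 0$), so the factor $\zeta(\lambda) - \zeta_l(\lambda) \to \zeta(\lambda)$. The term $\zeta_u(\lambda) = 2\Abs{\lambda}M_u/D_l \to +\infty$ as $D_l \to 0^+$, but it appears in the denominator of $m^*$ multiplied by $D_l$: specifically the denominator of \eqref{eq:condition_design} is $\gamma_u^2 D_l\left(\zeta_u(\lambda) - \zeta(\lambda)\right) = \gamma_u^2\left(D_l\,\zeta_u(\lambda) - D_l\,\zeta(\lambda)\right) = \gamma_u^2\left(2\Abs{\lambda}M_u - D_l\,\zeta(\lambda)\right) \to 2\Abs{\lambda}M_u\,\gamma_u^2$. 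Combining the numerator limit $\left(1 + 4\Abs{\lambda}^2 M_l\,\gamma_u\right)\zeta(\lambda)$ with this denominator limit yields
\begin{align*}
    \left.\lim_{D\rightarrow \Zero^+}m^*(\lambda)\right|_{\gamma_l=\gamma_u} = \frac{\left(1 + 4\Abs{\lambda}^2 M_l\,\gamma_u\right)}{2\Abs{\lambda}M_u\,\gamma_u^2}\,\zeta(\lambda)\,,
\end{align*}
which is exactly the claimed expression.

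The only subtlety — and the step I would be most careful about — is the order of limits and the fact that $D_l$ appears in a way that produces a $0 \cdot \infty$ indeterminate form: one must rewrite $D_l\,\zeta_u(\lambda)$ as $2\Abs{\lambda}M_u$ \emph{before} sending $D_l \to 0^+$, rather than substituting the limits of the factors independently. I would also note that the side conditions \eqref{eq:design_assumptions} of Theorem\,\ref{thm:design} remain satisfiable in this limit: $D_u D_l \to 0 < D^*(\lambda)$ holds automatically since $D^*(\lambda) > 0$, and the window $\zeta_l(\lambda) < \zeta(\lambda) < \zeta_u(\lambda)$ widens to $(0, +\infty)$, so the corollary's hypotheses are consistent with (indeed, a limiting case of) those of the parent theorem. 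No genuine obstacle is expected here; the result is a direct asymptotic simplification, and the "proof" is essentially the careful bookkeeping of the two limits described above.
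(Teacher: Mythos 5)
Your computation is correct: substituting $\gamma_l=\gamma_u$ into \eqref{eq:define_design_variables}, cancelling $D_l\,\zeta_u(\lambda)=2\Abs{\lambda}M_u(\gamma_u/\gamma_l)^2$ before taking $D\rightarrow\Zero^+$, and noting $\zeta_l(\lambda)\rightarrow 0$ yields exactly the stated limit. The paper omits its own proof of this corollary, but your direct bookkeeping (including the careful handling of the $0\cdot\infty$ form in the denominator) is evidently the intended argument, and your remark that the hypotheses \eqref{eq:design_assumptions} remain satisfiable in the limit is a worthwhile addition.
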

    \begin{proof}
        The proof is omitted due to brevity.
    \end{proof}
    The lower bound ($m^*$) on the droop control parameter, as presented in Corollary\,\ref{cor:design}, quantifies the restrictive design requirements (higher droop settings) as the POI gets weaker (lower $\gamma_u$). The design requirements are also influenced by the available inertia in the system (with lower inertia restricting design choices). Moreover, since $m_p$ is inversely related to the inverter capacity $S_j\,\forall j\in\mathcal{N}_i$, the above observation also places an upper limit on the size of the grid-forming storage that can be connected at a given POI. Numerical results, presented next, illustrate a deterioration in the damping performance at very low droop control gains, and/or extremely large storage size.

    \begin{remark}
        Theorem\,\ref{thm:design} prescribes a lower bound on droop control gain for inter-area damping enhancement, which complements the upper limits on droop control gains prescribed in the literature \cite{kundu2019identifying,nandanoori2020distributed,gorbunov2021identification,siahaan2024decentralized}, that stem from various small-signal and nonlinear stability considerations.
    \end{remark}

\section{Numerical Results and Simulation Studies} 
\label{sec:simulation}
%\subsection{Test System}
The positive-sequence phasor model of the IEEE $16$-machine $68$-bus test system is considered. The synchronous generators in the system are represented by the $4^{\text{th}}$-order machine model. The generator and line parameters are obtained from \cite{nilthesis}. The system is modified to include inverter-interfaced energy storage units on all load buses. Droop-controlled GFMs \cite{REGFM_A1} act as the grid interface for these distributed storage assets. The GFMs are equipped with limiters and relays for over-current protection. The GFM parameters are obtained from \cite{REGFM_A1}.

%\subsection{Impact of Droop Setting and Storage Size on Damping}
This study is focused on analyzing the impact of GFM parameters on the inter-area oscillation modes of the system. %These are eigenvalues of the matrix $A$, typically in the frequency range $0.1-1$ Hz, for which a group of generators oscillates in unison against another group in the system. 
The base case of the IEEE test system with only SGs, i.e., before the addition of the GFMs, has four dominant inter-area oscillation modes. These correspond to the modal frequencies $0.40$ Hz, $0.50$ Hz, $0.66$ Hz, and $0.78$ Hz, as shown in Figs \ref{fig:modal_analy_mp}. Observe that, for each mode, the damping is very poor with $\zeta < 1\%$. %In our modeling framework, as introduced in Section \ref{sec:model}, the synchronous generator-only base case is equivalent to having inverter sizes $S_j \rightarrow 0$ and $m_{p_j}$ infinitely large $\forall j \in \mathcal{N}_i$ (see, (\ref{droop_def})). 
Next, we introduce the GFM-interfaced storage resources into the test system. For a fixed storage size ($10\%$ of the total system load), we sweep $m_p$ over a wide range. As we progressively decrease the droop setting $\hat{m}_{p,j}$ from a high value (i.e., the synchronous generator-only base case\footnote{The synchronous generator-only base case is equivalent to having inverter sizes $S_j \rightarrow 0$ and $m_{p_j}$ infinitely large $\forall j \in \mathcal{N}_i$ (see, (\ref{droop_def}))}) to lower values, typically prescribed by IEEE design standards, the damping of the inter-area modes increases, %The observation is consistent for different storage sizes, 
as shown in Fig. \ref{fig:modal_analy_mp}.
This verifies the claims made in Section \ref{sec:analysis}. %For completeness, a comparison with GFL-interfaced storage is also illustrated. 
\begin{figure}
    \centering
    \includegraphics[width=\linewidth]{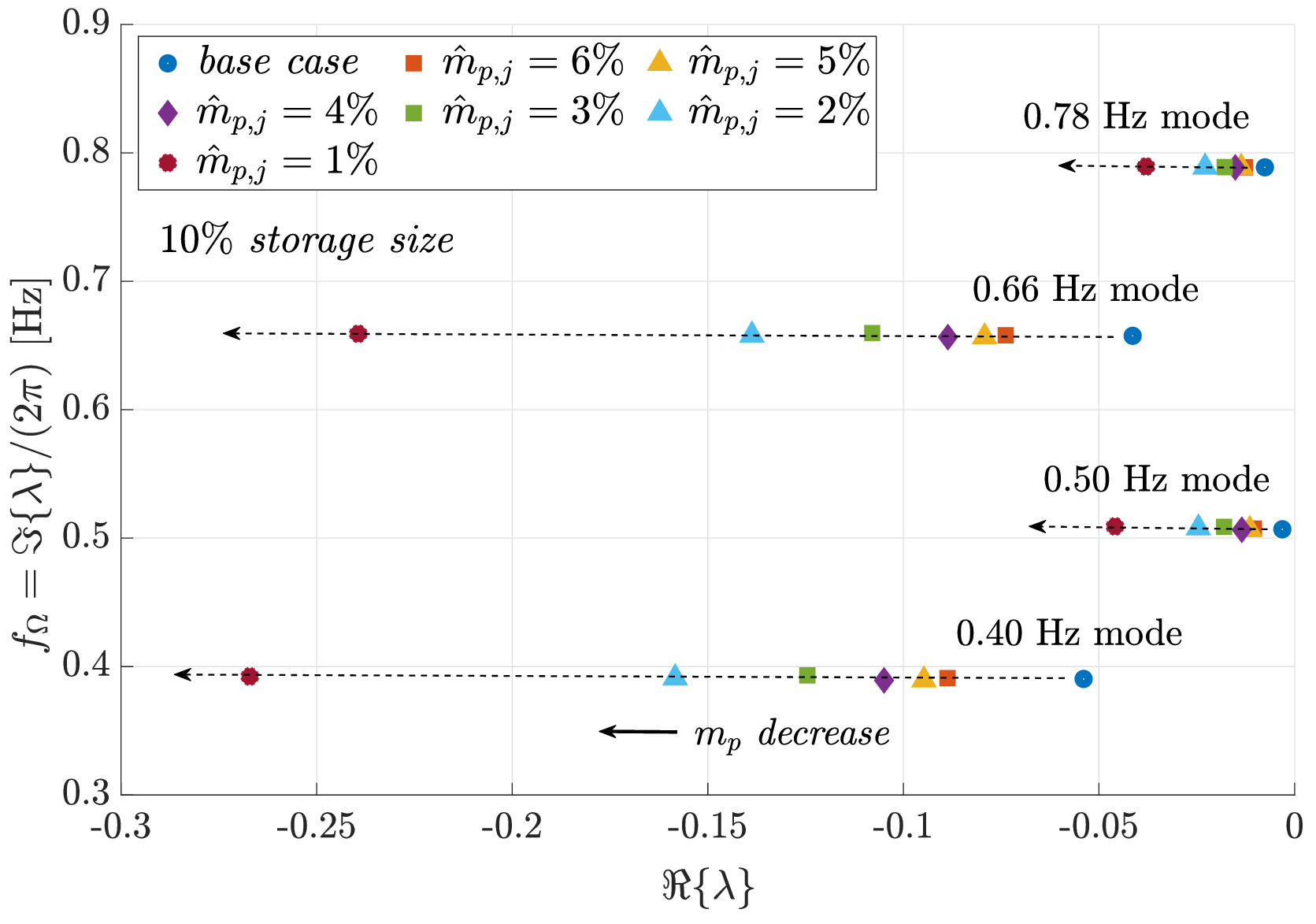}
\captionsetup{justification=raggedright,singlelinecheck=false}
    \caption{\small{Loci of the four dominant eigenvalues  (corresponding to the 4 inter-area modes) of the IEEE $68$-bus system for variation in the droop gain $\hat{m}_{p,j}$.}}
  \vspace{-0.3cm}
    \label{fig:modal_analy_mp}
\end{figure}
However, if $\hat{m}_{p,j}$ decreased to very low values, the trend in damping reverses. The critical limit on $m_p$ is attained below which the necessary conditions guaranteeing $\frac{\Re\{\Delta\lambda\}}{\Delta m_p} > 0$ do not hold. This is shown in Fig. \ref{reverse trend}. As observed, the trend is consistent for all four eigenvalues. However, it is worth noting that the $\hat{m}_{p,j}$ values in this simulation, for which this reversal of trend in damping was observed are much smaller than the typical droop setting $(2-5\%)$ prescribed in the IEEE standards \cite{Std2800}. So, it may be still reasonable to state that in the standard design range of droop coefficients, a decrease in $\hat{m}_{p,j}$ results in an increase in oscillation damping. That said, in low-inertia systems (low $M_k$ from the retirement of SGs and large $S_j$ from GFM-based storage), it may be possible for this reversal to occur at a relatively higher $\hat{m}_{p,j}$. Future work in this direction shall explore this aspect in detail. 

\begin{figure}
    \centering
\includegraphics[width=0.95\linewidth]{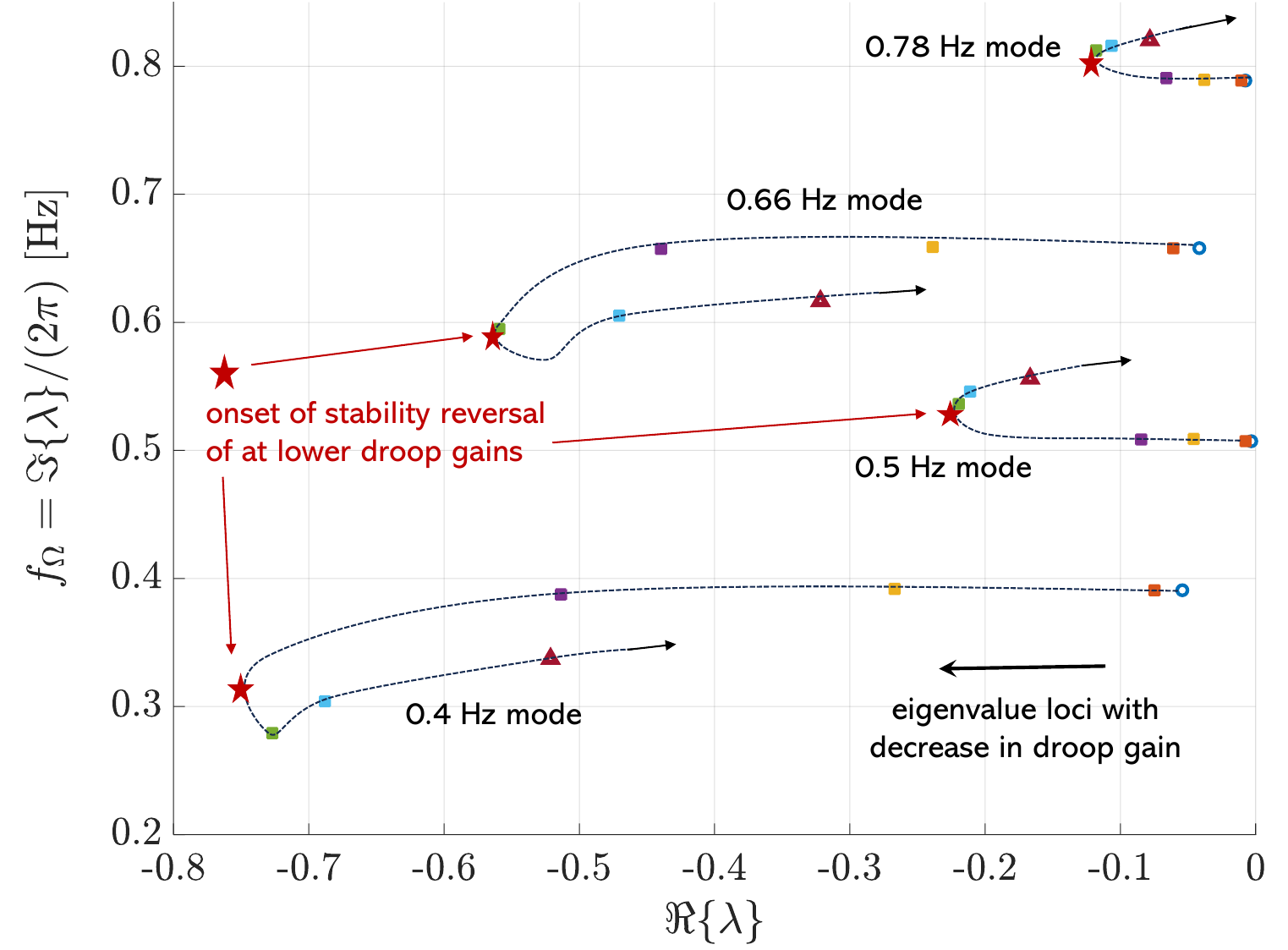}
\captionsetup{justification=raggedright,singlelinecheck=false}
    \caption{\small{Loci of four dominant eigenvalues (corresponding to the inter-area modes) for a wide range of $\hat{m}_{p,j}$ variation. In the standard range of droop gains, damping increases with a decrease in $\hat{m}_{p,j}$, however, at very low droop values the trend is reversed. }}
    \label{reverse trend}
    \vspace{-0.2cm}
\end{figure}

\begin{figure}
    \centering
    \includegraphics[width = 1.0\linewidth]{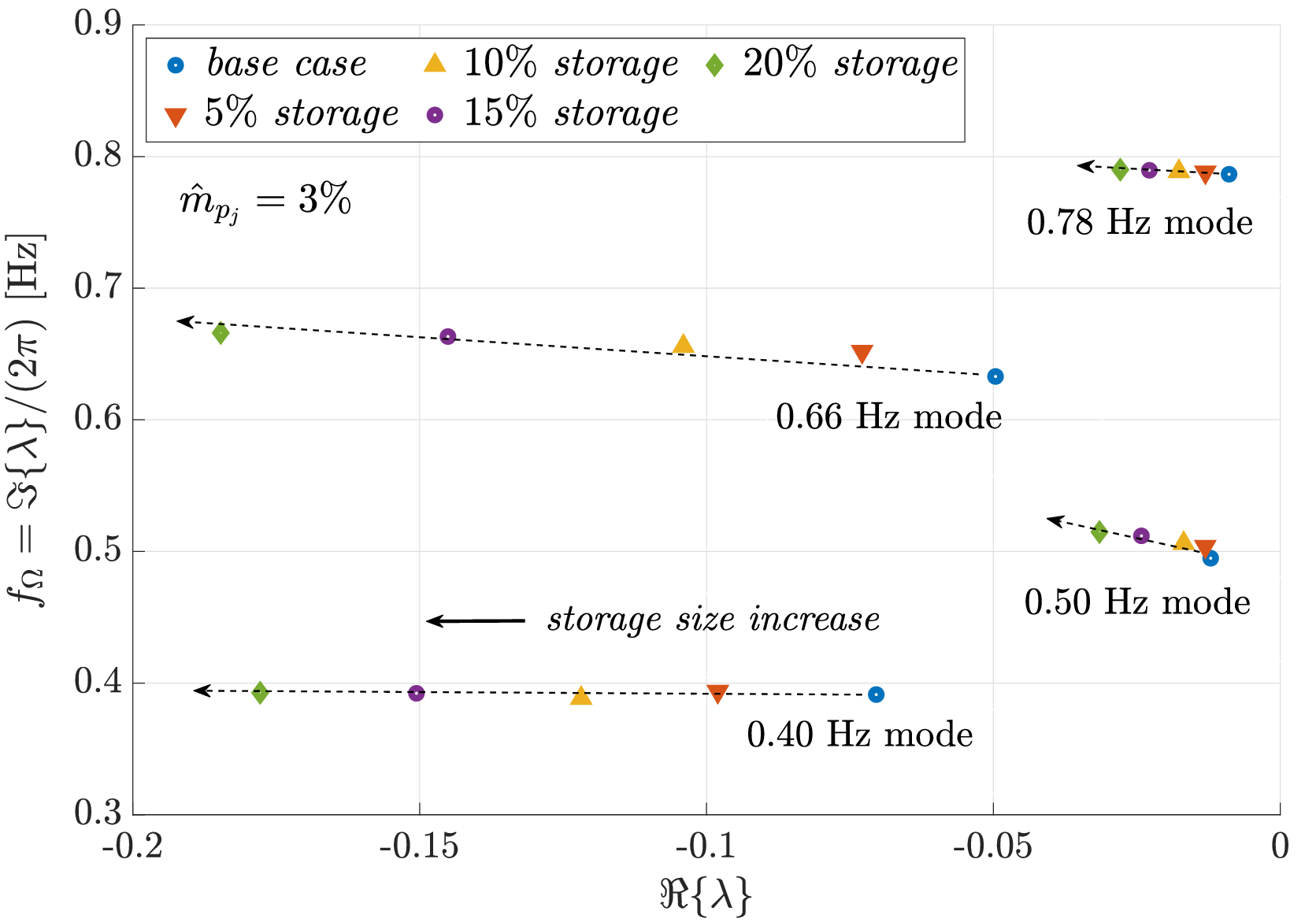}
\captionsetup{justification=raggedright,singlelinecheck=false}
    \caption{\small{Loci of four dominant eigenvalues (corresponding to the inter-area modes) with variation in the storage size.}}
    \label{fig:modal_anly_size}
    \vspace{-0.2cm}
\end{figure}

From \eqref{droop_def}, it is expected that the increase in damping observed from a decrease in $\hat{m}_{p,j}$ in Fig. \ref{fig:modal_analy_mp} should also be true for an increase in $S_j$. In other words, between two distributed storage networks with the same topology and the same droop settings on the GFMs but different storage capacities, the network with a higher storage capacity on individual inverters will have better damped inter-area modes. This is verified in Fig. \ref{fig:modal_anly_size}. Observe that, as the storage capacities of the individual inverters are gradually increased from $5\%$ of the system load to $20\%$, with the droop $m_{p,j}$ fixed at $3\%$, a significant increase in the damping is observed for all four inter-area modes.

\section{Conclusions}
\label{sec:conclusion}

This paper presented an analytical treatment for the characterization of damping in power networks with droop-based grid-forming resources. It was shown that both the capacity of the individual storage units and the droop settings on the GFMs influence the damping of the low-frequency inter-area oscillation modes. It was proved that damping of the inter-area modes increases with a decrease in the active power-frequency droop of the GFM and also with the increase in the sizes of their storage capacities. Theoretical analysis also revealed that these relationships between damping and droop setting, and storage capacity, are not monotonic. There exists a critical point beyond which decreasing the droop gains results in a decrease in damping. %conditions under which damping increased with damping
%sufficiency condition on the choice of these design parameters for which it could be claimed 

\bibliographystyle{IEEEtran}
\bibliography{Ref_damp}

% Generated by IEEEtran.bst, version: 1.14 (2015/08/26)
\begin{thebibliography}{10}
\providecommand{\url}[1]{#1}
\csname url@samestyle\endcsname
\providecommand{\newblock}{\relax}
\providecommand{\bibinfo}[2]{#2}
\providecommand{\BIBentrySTDinterwordspacing}{\spaceskip=0pt\relax}
\providecommand{\BIBentryALTinterwordstretchfactor}{4}
\providecommand{\BIBentryALTinterwordspacing}{\spaceskip=\fontdimen2\font plus
\BIBentryALTinterwordstretchfactor\fontdimen3\font minus
  \fontdimen4\font\relax}
\providecommand{\BIBforeignlanguage}[2]{{%
\expandafter\ifx\csname l@#1\endcsname\relax
\typeout{** WARNING: IEEEtran.bst: No hyphenation pattern has been}%
\typeout{** loaded for the language `#1'. Using the pattern for}%
\typeout{** the default language instead.}%
\else
\language=\csname l@#1\endcsname
\fi
#2}}
\providecommand{\BIBdecl}{\relax}
\BIBdecl

\bibitem{ibr_review}
J.~Matevosyan, J.~MacDowell, N.~Miller, B.~Badrzadeh, D.~Ramasubramanian,
  A.~Isaacs, R.~Quint, E.~Quitmann, R.~Pfeiffer, H.~Urdal, T.~Prevost,
  V.~Vittal, D.~Woodford, S.~H. Huang, and J.~O’Sullivan, ``{A Future With
  Inverter-Based Resources: Finding Strength From Traditional Weakness},''
  \emph{IEEE Power and Energy Magazine}, vol.~19, no.~6, pp. 18--28, 2021.

\bibitem{review_ibr_support_func}
S.~Xu, Y.~Xue, and L.~Chang, ``{Review of Power System Support Functions for
  Inverter-Based Distributed Energy Resources- Standards, Control Algorithms,
  and Trends},'' \emph{IEEE Open Journal of Power Electronics}, vol.~2, pp.
  88--105, 2021.

\bibitem{review_flexibility}
B.~Mohandes, M.~S.~E. Moursi, N.~Hatziargyriou, and S.~E. Khatib, ``{A Review
  of Power System Flexibility With High Penetration of Renewables},''
  \emph{IEEE Transactions on Power Systems}, vol.~34, no.~4, pp. 3140--3155,
  2019.

\bibitem{gfm_review}
J.~Matevosyan, B.~Badrzadeh, T.~Prevost, E.~Quitmann, D.~Ramasubramanian,
  H.~Urdal, S.~Achilles, J.~MacDowell, S.~H. Huang, V.~Vital, J.~O’Sullivan,
  and R.~Quint, ``{Grid-Forming Inverters: Are They the Key for High Renewable
  Penetration?}'' \emph{IEEE Power and Energy Magazine}, vol.~17, no.~6, pp.
  89--98, 2019.

\bibitem{rebalancing}
B.~Chaudhuri, D.~Ramasubramanian, J.~Matevosyan, M.~O’Malley, N.~Miller,
  T.~Green, and X.~Zhou, ``{Rebalancing Needs and Services for Future Grids:
  System Needs and Service Provisions With Increasing Shares of Inverter-Based
  Resources},'' \emph{IEEE Power and Energy Magazine}, vol.~22, no.~2, pp.
  30--41, 2024.

\bibitem{distr_control_siemens}
S.~Bhela, A.~Banerjee, U.~Muenz, and J.~Bamberger, ``{Robust Decentralized
  Secondary Control Scheme for Inverter-based Power Networks},'' in \emph{2023
  American Control Conference (ACC)}, 2023, pp. 1874--1880.

\bibitem{embedded_storage_quan}
J.~Kerby, Q.~Nguyen, X.~Lyu, and D.~Baldwin, ``{Impacts of Control,
  Penetration, and Distribution of Embedded Storage Network in Bulk Power
  System},'' in \emph{2024 IEEE Power \& Energy Society General Meeting}, 2024,
  pp. 1--5.

\bibitem{pnnl_embedded_storage}
R.~O'Neil, A.~Becker-Dippmann, and J.~D. Taft, ``{The Use of Embedded Electric
  Grid Storage for Resilience, Operational Flexibility, and CyberSecurity},''
  Pacific Northwest National Laboratory, Richland, WA, USA, Tech. Rep., 2019.

\bibitem{gfm_lasseter}
R.~H. Lasseter, Z.~Chen, and D.~Pattabiraman, ``{Grid-Forming Inverters: A
  Critical Asset for the Power Grid},'' \emph{IEEE Journal of Emerging and
  Selected Topics in Power Electronics}, vol.~8, no.~2, pp. 925--935, 2020.

\bibitem{weiDu_gfm_gfl}
W.~Du, F.~K. Tuffner, K.~P. Schneider, R.~H. Lasseter, J.~Xie, Z.~Chen, and
  B.~Bhattarai, ``{Modeling of Grid-Forming and Grid-Following Inverters for
  Dynamic Simulation of Large-Scale Distribution Systems},'' \emph{IEEE
  Transactions on Power Delivery}, vol.~36, no.~4, pp. 2035--2045, 2021.

\bibitem{GFM_GFL_compare}
D.~Pattabiraman, R.~H. Lasseter., and T.~M. Jahns, ``{Comparison of Grid
  Following and Grid Forming Control for a High Inverter Penetration Power
  System},'' in \emph{2018 IEEE Power \& Energy Society General Meeting
  (PESGM)}, Aug 2018, pp. 1--5.

\bibitem{compare_vsm_droop}
J.~Liu, Y.~Miura, and T.~Ise, ``{Comparison of Dynamic Characteristics Between
  Virtual Synchronous Generator and Droop Control in Inverter-Based Distributed
  Generators},'' \emph{IEEE Transactions on Power Electronics}, vol.~31, no.~5,
  pp. 3600--3611, 2016.

\bibitem{weiDu_droop_comparison}
W.~Du, Z.~Chen, K.~P. Schneider, R.~H. Lasseter, S.~Pushpak~Nandanoori, F.~K.
  Tuffner, and S.~Kundu, ``{A Comparative Study of Two Widely Used Grid-Forming
  Droop Controls on Microgrid Small-Signal Stability},'' \emph{IEEE Journal of
  Emerging and Selected Topics in Power Electronics}, vol.~8, no.~2, pp.
  963--975, 2020.

\bibitem{brian_voc}
B.~B. Johnson, M.~Sinha, N.~G. Ainsworth, F.~Dörfler, and S.~V. Dhople,
  ``{Synthesizing Virtual Oscillators to Control Islanded Inverters},''
  \emph{IEEE Transactions on Power Electronics}, vol.~31, no.~8, pp.
  6002--6015, 2016.

\bibitem{Chatterjee_def_statcom}
K.~Chatterjee, S.~Samanta, and N.~R. Chaudhuri, ``{Insights Into Dissipating
  Energy-Based Source/Sink Characterization of TCSC and STATCOM for
  Low-Frequency Oscillations},'' \emph{IEEE Transactions on Power Delivery},
  vol.~38, no.~2, pp. 1426--1439, 2023.

\bibitem{vsg_droop_comparison}
J.~Liu, Y.~Miura, and T.~Ise, ``{Comparison of Dynamic Characteristics Between
  Virtual Synchronous Generator and Droop Control in Inverter-Based Distributed
  Generators},'' \emph{IEEE Transactions on Power Electronics}, vol.~31, no.~5,
  pp. 3600--3611, 2016.

\bibitem{gfm_gfl_duality}
Y.~Li, Y.~Gu, and T.~C. Green, ``{Revisiting Grid-Forming and Grid-Following
  Inverters: A Duality Theory},'' \emph{{IEEE Transactions on Power Systems}},
  vol.~37, no.~6, pp. 4541--4554, 2022.

\bibitem{droop_small_signal}
L.~Ding, X.~Lu, and J.~Tan, ``Comparative small-signal stability analysis of
  grid-forming and grid-following inverters in low-inertia power systems,'' in
  \emph{IECON 2021 – 47th Annual Conference of the IEEE Industrial
  Electronics Society}, 2021, pp. 1--6.

\bibitem{smallsignal_lowinertia}
------, ``Small-signal stability analysis of low-inertia power grids with
  inverter-based resources and synchronous condensers,'' in \emph{2022 IEEE
  Power \& Energy Society Innovative Smart Grid Technologies Conference
  (ISGT)}, 2022, pp. 1--5.

\bibitem{9328533}
A.~Lasheen, M.~E. Ammar, H.~H. Zeineldin, A.~Al-Durra, M.~F. Shaaban, and
  E.~El-Saadany, ``Assessing the impact of reactive power droop on inverter
  based microgrid stability,'' \emph{IEEE Transactions on Energy Conversion},
  vol.~36, no.~3, pp. 2380--2392, 2021.

\bibitem{kundu2019identifying}
S.~Kundu, W.~Du, S.~P. Nandanoori, F.~Tuffner, and K.~Schneider, ``Identifying
  parameter space for robust stability in nonlinear networks: A microgrid
  application,'' in \emph{2019 American Control Conference (ACC)}.\hskip 1em
  plus 0.5em minus 0.4em\relax IEEE, 2019, pp. 3111--3116.

\bibitem{nandanoori2020distributed}
S.~P. Nandanoori, S.~Kundu, W.~Du, F.~K. Tuffner, and K.~P. Schneider,
  ``Distributed small-signal stability conditions for inverter-based unbalanced
  microgrids,'' \emph{IEEE Transactions on Power Systems}, vol.~35, no.~5, pp.
  3981--3990, 2020.

\bibitem{gorbunov2021identification}
A.~Gorbunov, J.~C.-H. Peng, J.~W. Bialek, and P.~Vorobev, ``Identification of
  stability regions in inverter-based microgrids,'' \emph{IEEE Transactions on
  Power Systems}, vol.~37, no.~4, pp. 2613--2623, 2021.

\bibitem{siahaan2024decentralized}
Z.~Siahaan, E.~Mallada, and S.~Geng, ``Decentralized stability criteria for
  grid-forming control in inverter-based power systems,'' in \emph{2024 IEEE
  Power \& Energy Society General Meeting (PESGM)}.\hskip 1em plus 0.5em minus
  0.4em\relax IEEE, 2024.

\bibitem{nilthesis}
\BIBentryALTinterwordspacing
N.~R. Chaudhuri, ``{Wide-area Monitoring and Control of Future Smart Grids},''
  Ph.D. dissertation, Imperial College, London, U.K., 2011. [Online].
  Available: \url{http://hdl.handle.net/10044/1/7026}
\BIBentrySTDinterwordspacing

\bibitem{REGFM_A1}
W.~Du, Y.~Liu, F.~K. Tuffner, R.~Huang, and Z.~Huang, ``{Model Specification of
  Droop-Controlled, Grid-Forming Inverters ({REGFM\_A1})},'' Pacific Northwest
  National Laboratory, Tech. Rep. PNNL-35110, 2023.

\bibitem{Std2800}
``{IEEE Standard for Interconnection and Interoperability of Inverter-Based
  Resources (IBRs) Interconnecting with Associated Transmission Electric Power
  Systems},'' \emph{IEEE Std 2800-2022}, pp. 1--180, 2022.

\end{thebibliography}

\appendices
\discard{\section{Proof of Proposition\,\ref{fact1}}\label{proof:fact1}

Note that we can write $v$ and $C$ as:
\begin{align*}
    v&=\Real{v}+j\Imag{v}\,,\,~\,C ={\Real{C}}+j\,{\Imag{C}}\,,
\end{align*}
where $\Real{C}$ and $\Imag{C}$ are, respectively, symmetric and skew-symmetric. After expanding $v^H\,C\,v=0$ and performing some rearrangement of terms, we have:
\begin{align*}
    \begin{bmatrix}
        \Real{v} \\ \Imag{v}
    \end{bmatrix}^\top\Xi\,\begin{bmatrix}
        \Real{v}\\ \Imag{v}
    \end{bmatrix}&=0\,,\,\text{where}\,~\,\Xi:=\begin{bmatrix}
        \Real{C} & -\Imag{C}\\\Imag{C} & \Real{C}
    \end{bmatrix}.
\end{align*}
Note that $\Xi$ defined above is a symmetric matrix. For the above condition to hold for some $v$\,, the matrix $\Xi$ must not be either positive definite or negative definite. Therefore, a set of necessary conditions follow from the application of Schur complement (and noting that $\Real{C}$ is real symmetric):
\begin{align*}
    \lmax{\Real{C}}\ge 0\ge\lmin{\Real{C}}
\end{align*}
This completes the proof.

\section{Proof of Lemma\,\ref{fact2}}\label{proof:fact2}

Note that we can write $C$ as:
\begin{align}\label{eq:Ch}
    C &= C_h + C_{h'}\\
    \text{where, }\,~\,C_h&:=\frac{1}{2}\left(C+C^H\right)=\Symm{\Real{C}}+j\,\AntiSymm{\Imag{C}}\,.\notag\\
    C_{h'}&:=\frac{1}{2}\left(C-C^H\right)=\AntiSymm{\Real{C}}+j\,\Symm{\Imag{C}}\,,\notag
\end{align}
such that $C_h$ is Hermitian (i.e., $C_h^H=C_h$) and $C_h$ is anti-Hermitian (i.e., $C_{h'}^H=-C_{h'}$), where $j$ is used to denote the \textit{imaginary unit}, while $\Symm{\cdot}$ and $\AntiSymm{\cdot}$ denote, respectively, the symmetric and skew-symmetric parts, as defined in \eqref{eq:def_symm}. Singularity of $C$ implies that there exists a non-zero vector $v\in\mathbb{C}^{n}$ such that,
\begin{align*}
    v^H\,C\,v=0\implies v^H\,C_h\,v + v^H\,C_{h'}\,v=0
\end{align*}
Taking complex conjugate of both sides, and using the properties of Hermitian and anti-Hermitian matrices, we obtain:
\begin{align*}
    v^H\,C_h\,v - v^H\,C_{h'}\,v=0
\end{align*}
The above two equalities together yield:
\begin{align*}
    v^H\,C_h\,v = 0\,,\,~\, v^H\,C_{h'}\,v=0\,.
\end{align*}

First we focus on $v^H\,C_h\,v=0$\,, where $C_h$ is Hermitian. Note that $\Real{C_h}=\Symm{\Real{C}}$\,. Hence, on application of Proposition\,\ref{fact1}, we have the following necessary conditions:
\begin{align*}
    \lmax{\Symm{\Real{C}}}\ge 0\ge\lmin{\Symm{\Real{C}}}
\end{align*}

Next, we consider $v^H\,C_{h'}\,v=0$ and note that we can rewrite the equality as $v^H\,\left(-jC_{h'}\right)\,v=0$\,, where $-jC_{h'}$ is Hermitian. The rest of the proof follows trivially by applying Proposition\,\ref{fact1}, and using \eqref{eq:Ch}. }

\section{Proof of Proposition\,\ref{prop:intermediate_result}}\label{proof:prop_R}
Using Assumption\,\ref{AS:slower}, we can approximate:
    \begin{align*}
        &\lim_{m_p\rightarrow\infty}\left(\lambda\,\Eye_{n_i}\!\!+m_pK_{ii}\right)^{-2}\!\!\\
        &=\left(\Conj{\lambda}\,\Eye_{n_i}\!+m_p\,K_{ii}\right)^2\left(\Abs{\lambda}^2\Eye_{n_i}\!+2\Real{\lambda}m_pK_{ii}+m_p^2K_{ii}^2\right)^{-2}\!\!\\
        &\approx \left(2\,\Conj{\lambda}\,m_p\,K_{ii}+m_p^2K_{ii}^2\right)\left(m_p^2K_{ii}^2+\Order{m_p}\right)^{-2}\\
        &=\frac{1}{m_p^3}\left(2\,\Conj{\lambda}\,K_{ii}^{-3}+m_p\,K_{ii}^{-2}+\Order{1/m_p}\right)
    \end{align*}
    Applying \eqref{eq:URQTheta}, we obtain:
    \begin{align*}
        \lim_{m_p\rightarrow\infty}R&\approx \frac{1}{m_p^3}\,K_{gi}\left(2\,\Conj{\lambda}\,K_{ii}^{-3}+m_p\,K_{ii}^{-2}+\Order{1/m_p}\right)K_{ig}\\
        &=\frac{1}{m_p^3}\left(m_p\,U_1 + 2\,\Conj{\lambda}\,U_2 + \Order{1/m_p}\right)
    \end{align*}
    The expression for $\Theta$ follows similarly, when we notice that:
    \begin{align*}
        &\lim_{m_p\rightarrow\infty}\left(m_p\,\Conj{R}+\Conj{Q}(\lambda)\right)\\
        &\approx\frac{1}{m_p^2}\left(m_p\,U_1+2\,\lambda\, U_2+\Order{1/m_p}\right)+Q(\Conj{\lambda})\\
        &\approx \frac{1}{m_p}\left(U_1+m_p\,Q(\Conj{\lambda})+\Order{1/m_p}\right)\\
        \implies &\lim_{m_p\rightarrow\infty}\Theta =\lim_{m_p\rightarrow\infty} \lambda R\left(m_p\,\Conj{R}+Q(\Conj{\lambda})\right)\\
        &\approx \frac{1}{m_p^3}\lambda \left[m_p\,U_1\,Q(\Conj{\lambda})+U_1^2+2\Conj{\lambda}\,U_2\,Q(\Conj{\lambda})+\Order{1/m_p} \right]
    \end{align*}

%\balance

\end{document}